\documentclass[11pt,english]{article}
\usepackage{lmodern}

\usepackage[T1]{fontenc}
\usepackage[latin9]{inputenc}
\usepackage{amsmath}
\usepackage{amsthm}
\usepackage{amssymb}
\usepackage{geometry}
\makeatletter
\numberwithin{figure}{section}
\numberwithin{equation}{section}

\usepackage{float,psfrag,epsfig,color,url,hyperref}
\usepackage{algorithm,algorithmic}
\usepackage{graphicx,relsize}
\usepackage{amssymb,amsfonts,amsmath,amsthm,amscd,dsfont,mathrsfs,mathtools,microtype,nicefrac,pifont}
\usepackage{upgreek}
\usepackage[dvipsnames]{xcolor}
\usepackage{epstopdf,bbm,enumitem}
\usepackage{dsfont,tikz}
\usepackage[mathscr]{euscript}
\usepackage[toc,page]{appendix}
\hypersetup{
  colorlinks,
  linkcolor={red!50!black},
  citecolor={blue!50!black},
  urlcolor={blue!80!black}
}
\usepackage{etoolbox}
\patchcmd{\thebibliography}
  {\settowidth}
  {\setlength{\itemsep}{0pt plus 0.1pt}\settowidth}
  {}{}
\apptocmd{\thebibliography}
  {\small}
  {}{}
\allowdisplaybreaks
\usepackage{custom}
\makeatletter
\renewcommand{\paragraph}{%
  \@startsection{paragraph}{4}%
  {\z@}{1.25ex \@plus 1ex \@minus .2ex}{-1em}%
  {\normalfont\normalsize\bfseries}%
}
\makeatother
\makeatother

\theoremstyle{plain}
\newtheorem{thm}{\protect\theoremname}[section]
\newtheorem{cor}[thm]{\protect\corollaryname}
\newtheorem{lem}[thm]{\protect\lemmaname}
\newtheorem{prop}[thm]{\protect\propositionname}
\theoremstyle{remark}
\newtheorem*{acknowledgement*}{\protect\acknowledgementname}
\usepackage{babel}
\providecommand{\acknowledgementname}{Acknowledgement}
\providecommand{\corollaryname}{Corollary}
\providecommand{\lemmaname}{Lemma}
\providecommand{\propositionname}{Proposition}
\providecommand{\theoremname}{Theorem}

\begin{document}
%--------------------------------------------------------------------------------------------------------------------------------
% Environment shortcuts
%--------------------------------------------------------------------------------------------------------------------------------
\def\balign#1\ealign{\begin{align}#1\end{align}}
\def\baligns#1\ealigns{\begin{align*}#1\end{align*}}
\def\balignat#1\ealign{\begin{alignat}#1\end{alignat}}
\def\balignats#1\ealigns{\begin{alignat*}#1\end{alignat*}}
\def\bitemize#1\eitemize{\begin{itemize}#1\end{itemize}}
\def\benumerate#1\eenumerate{\begin{enumerate}#1\end{enumerate}}

% Align environments that use textstyle instead of displaystyle
\newenvironment{talign*}
 {\let\displaystyle\textstyle\csname align*\endcsname}
 {\endalign}
\newenvironment{talign}
 {\let\displaystyle\textstyle\csname align\endcsname}
 {\endalign}

\def\balignst#1\ealignst{\begin{talign*}#1\end{talign*}}
\def\balignt#1\ealignt{\begin{talign}#1\end{talign}}
%---------------------------------------------------

%--------------------------------------------------------------------------------------------------------------------------------
% Redefine left and right to remove initial and trailing space
%--------------------------------------------------------------------------------------------------------------------------------
\let\originalleft\left
\let\originalright\right
\renewcommand{\left}{\mathopen{}\mathclose\bgroup\originalleft}
\renewcommand{\right}{\aftergroup\egroup\originalright}

%--------------------------------------------------------------------------------------------------------------------------------
% Words with special symbols
%--------------------------------------------------------------------------------------------------------------------------------
\def\Gronwall{Gr\"onwall\xspace}
\def\Holder{H\"older\xspace}
\def\Ito{It\^o\xspace}
\def\Nystrom{Nystr\"om\xspace}
\def\Schatten{Sch\"atten\xspace}
\def\Matern{Mat\'ern\xspace}

%--------------------------------------------------------------------------------------------------------------------------------
% Smaller citations
%--------------------------------------------------------------------------------------------------------------------------------
\def\tinycitep*#1{{\tiny\citep*{#1}}}
\def\tinycitealt*#1{{\tiny\citealt*{#1}}}
\def\tinycite*#1{{\tiny\cite*{#1}}}
\def\smallcitep*#1{{\scriptsize\citep*{#1}}}
\def\smallcitealt*#1{{\scriptsize\citealt*{#1}}}
\def\smallcite*#1{{\scriptsize\cite*{#1}}}

%--------------------------------------------------------------------------------------------------------------------------------
% Colors
%--------------------------------------------------------------------------------------------------------------------------------
\def\blue#1{\textcolor{blue}{{#1}}}
\def\green#1{\textcolor{green}{{#1}}}
\def\orange#1{\textcolor{orange}{{#1}}}
\def\purple#1{\textcolor{purple}{{#1}}}
\def\red#1{\textcolor{red}{{#1}}}
\def\teal#1{\textcolor{teal}{{#1}}}

%--------------------------------------------------------------------------------------------------------------------------------
% Font styles
%--------------------------------------------------------------------------------------------------------------------------------
\def\mbi#1{\boldsymbol{#1}} % Bold and italic (math bold italic)
\def\mbf#1{\mathbf{#1}}
\def\mrm#1{\mathrm{#1}}
\def\tbf#1{\textbf{#1}}
\def\tsc#1{\textsc{#1}}

%--------------------------------------------------------------------------------------------------------------------------------
% Bold and italic variables
%--------------------------------------------------------------------------------------------------------------------------------
\def\mbiA{\mbi{A}}
\def\mbiB{\mbi{B}}
\def\mbiC{\mbi{C}}
\def\mbiDelta{\mbi{\Delta}}
\def\mbif{\mbi{f}}
\def\mbiF{\mbi{F}}
\def\mbih{\mbi{g}}
\def\mbiG{\mbi{G}}
\def\mbih{\mbi{h}}
\def\mbiH{\mbi{H}}
\def\mbiI{\mbi{I}}
\def\mbim{\mbi{m}}
\def\mbiP{\mbi{P}}
\def\mbiQ{\mbi{Q}}
\def\mbiR{\mbi{R}}
\def\mbiv{\mbi{v}}
\def\mbiV{\mbi{V}}
\def\mbiW{\mbi{W}}
\def\mbiX{\mbi{X}}
\def\mbiY{\mbi{Y}}
\def\mbiZ{\mbi{Z}}

%--------------------------------------------------------------------------------------------------------------------------------
% Textstyle vs. displaystyle
%--------------------------------------------------------------------------------------------------------------------------------
\def\textsum{{\textstyle\sum}} % Sum in textstyle form
\def\textprod{{\textstyle\prod}} % Prod in textstyle form
\def\textbigcap{{\textstyle\bigcap}} % Bigcap in textstyle form
\def\textbigcup{{\textstyle\bigcup}} % Bigcup in textstyle form

%--------------------------------------------------------------------------------------------------------------------------------
% Mathematical sets
%--------------------------------------------------------------------------------------------------------------------------------
\def\reals{\mathbb{R}} % Real number symbol
\def\integers{\mathbb{Z}} % Integer symbol
\def\rationals{\mathbb{Q}} % Rational numbers
\def\naturals{\mathbb{N}} % Natural numbers
\def\complex{\mathbb{C}} % Complex numbers

\def\what#1{\widehat{#1}}

\def\twovec#1#2{\left[\begin{array}{c}{#1} \\ {#2}\end{array}\right]}
\def\threevec#1#2#3{\left[\begin{array}{c}{#1} \\ {#2} \\ {#3} \end{array}\right]}
\def\nvec#1#2#3{\left[\begin{array}{c}{#1} \\ {#2} \\ \vdots \\ {#3}\end{array}\right]} % An n-vector with three arguments

%--------------------------------------------------------------------------------------------------------------------------------
% Eigenvalues
%--------------------------------------------------------------------------------------------------------------------------------
\def\maxeig#1{\lambda_{\mathrm{max}}\left({#1}\right)}
\def\mineig#1{\lambda_{\mathrm{min}}\left({#1}\right)}

%--------------------------------------------------------------------------------------------------------------------------------
% Operators
%--------------------------------------------------------------------------------------------------------------------------------
\def\Re{\operatorname{Re}} % Real part
\def\indic#1{\mbb{I}\left[{#1}\right]} % Indicator function
\def\logarg#1{\log\left({#1}\right)} % log with argument
\def\polylog{\operatorname{polylog}}
\def\maxarg#1{\max\left({#1}\right)} % max with argument
\def\minarg#1{\min\left({#1}\right)} % min with argument
\def\Earg#1{\E\left[{#1}\right]}
\def\Esub#1{\E_{#1}}
\def\Esubarg#1#2{\E_{#1}\left[{#2}\right]}
\def\bigO#1{\mathcal{O}\left(#1\right)} % big-oh notation
\def\littleO#1{o(#1)} % big-oh notation
\def\P{\mbb{P}} % Probability symbol
\def\Parg#1{\P\left({#1}\right)}
\def\Psubarg#1#2{\P_{#1}\left[{#2}\right]}
\def\Trarg#1{\Tr\left[{#1}\right]} % Trace with argument
\def\trarg#1{\tr\left[{#1}\right]} % trace with argument
\def\Var{\mrm{Var}} % Variance symbol
\def\Vararg#1{\Var\left[{#1}\right]}
\def\Varsubarg#1#2{\Var_{#1}\left[{#2}\right]}
\def\Cov{\mrm{Cov}} % Covariance symbol
\def\Covarg#1{\Cov\left[{#1}\right]}
\def\Covsubarg#1#2{\Cov_{#1}\left[{#2}\right]}
\def\Corr{\mrm{Corr}} % Covariance symbol
\def\Corrarg#1{\Corr\left[{#1}\right]}
\def\Corrsubarg#1#2{\Corr_{#1}\left[{#2}\right]}
\newcommand{\info}[3][{}]{\mathbb{I}_{#1}\left({#2};{#3}\right)} % Information symbol
\newcommand{\staticexp}[1]{\operatorname{exp}(#1)} % An exponential with parens that do not resize with input
\newcommand{\loglihood}[0]{\mathcal{L}} % log likelihood

% Copied from mathrsfs.sty

%--------------------------------------------------------------------------------------------------------------------------------
% Optimization macros
%--------------------------------------------------------------------------------------------------------------------------------
%\providecommand{\argmax}{\mathop\mathrm{arg max}} % Defining math symbols
%\providecommand{\argmin}{\mathop\mathrm{arg min}}
\providecommand{\arccos}{\mathop\mathrm{arccos}}
\providecommand{\dom}{\mathop\mathrm{dom}}
\providecommand{\diag}{\mathop\mathrm{diag}}
\providecommand{\tr}{\mathop\mathrm{tr}}
\providecommand{\card}{\mathop\mathrm{card}}
\providecommand{\sign}{\mathop\mathrm{sign}}
\providecommand{\conv}{\mathop\mathrm{conv}} % Convex hull
\def\rank#1{\mathrm{rank}({#1})}
\def\supp#1{\mathrm{supp}({#1})}

\providecommand{\minimize}{\mathop\mathrm{minimize}}
\providecommand{\maximize}{\mathop\mathrm{maximize}}
\providecommand{\subjectto}{\mathop\mathrm{subject\;to}}

\def\openright#1#2{\left[{#1}, {#2}\right)}

%--------------------------------------------------------------------------------------------------------------------------------
% Proof environments
%--------------------------------------------------------------------------------------------------------------------------------
\ifdefined\nonewproofenvironments\else
% The Theorems are numbered consecutively
% Lemmas are numbered by section, and observations, claims, facts, and 
% assumptions take their numbering. Propositions and definitions have their
% own numbering by section.
\ifdefined\ispres\else
% These conflict with Beamer definitions in pres mode
% \newtheorem{theorem}{Theorem}
% \newtheorem{lemma}[theorem]{Lemma}
% \newtheorem{corollary}[theorem]{Corollary}
% \newtheorem{definition}[theorem]{Definition}
% \newtheorem{fact}[theorem]{Fact}
% \renewenvironment{proof}{\noindent\textbf{Proof.}\hspace*{.3em}}{\qed \vspace{.1in}}
% \newenvironment{proof-sketch}{\noindent\textbf{Proof Sketch}
%   \hspace*{1em}}{\qed\bigskip\\}
% \newenvironment{proof-idea}{\noindent\textbf{Proof Idea}
%   \hspace*{1em}}{\qed\bigskip\\}
% \newenvironment{proof-of-lemma}[1][{}]{\noindent\textbf{Proof of Lemma {#1}}
%   \hspace*{1em}}{\qed\\}
%   \newenvironment{proof-of-proposition}[1][{}]{\noindent\textbf{Proof of Proposition {#1}}
%   \hspace*{1em}}{\qed\\}
% \newenvironment{proof-of-theorem}[1][{}]{\noindent\textbf{Proof of Theorem {#1}}
%   \hspace*{1em}}{\qed\\}
% \newenvironment{proof-attempt}{\noindent\textbf{Proof Attempt}
%   \hspace*{1em}}{\qed\bigskip\\}
% \newenvironment{proofof}[1]{\noindent\textbf{Proof of {#1}}
%   \hspace*{1em}}{\qed\bigskip\\}
 
% \newtheorem*{remark*}{Remark}
% \newenvironment{remark}{\noindent\textbf{Remark.}
%   \hspace*{0em}}{\smallskip}%\bigskip}
% \newenvironment{remarks}{\noindent\textbf{Remarks}
%   \hspace*{1em}}{\smallskip}
% \fi
% \newtheorem{observation}[theorem]{Observation}
% \newtheorem{proposition}[theorem]{Proposition}
% \newtheorem{claim}[theorem]{Claim}
% \newtheorem{assumption}{Assumption}
% \theoremstyle{definition}
% \newtheorem{example}[theorem]{Example}
%\renewcommand{\theassumption}{\Alph{assumption}} % Set counter for assumptions
                                                 % to be alphabetical
\fi
\fi
% Makes equation numbers have (1.1) style
% \numberwithin{equation}{section}
% \numberwithin{equation}{subsection}
\makeatletter
\@addtoreset{equation}{section}
\makeatother
\def\theequation{\thesection.\arabic{equation}}

\newcommand{\cmark}{\ding{51}}

\newcommand{\xmark}{\ding{55}}

%--------------------------------------------------------------------------------------------------------------------------------
% Equation environments
%--------------------------------------------------------------------------------------------------------------------------------
\newcommand{\eq}[1]{\begin{align}#1\end{align}}
\newcommand{\eqn}[1]{\begin{align*}#1\end{align*}}
\renewcommand{\Pr}[1]{\mathbb{P}\left( #1 \right)}
\newcommand{\Ex}[1]{\mathbb{E}\left[#1\right]}
%\newcommand{\var}[1]{\text{Var}\left(#1\right)}
%\newcommand{\ind}[1]{{\mathbbm{1}}_{\{ #1 \}} }
%\newcommand{\abs}[1]{\left|#1\right|}

%--------------------------------------------------------------------------------------------------------------------------------
% Comment environments
%--------------------------------------------------------------------------------------------------------------------------------
\newcommand{\matt}[1]{{\textcolor{Maroon}{[Matt: #1]}}}
\newcommand{\kook}[1]{{\textcolor{blue}{[Kook: #1]}}}
\definecolor{OliveGreen}{rgb}{0,0.6,0}
\newcommand{\sv}[1]{{\textcolor{OliveGreen}{[Santosh: #1]}}}

\global\long\def\on#1{\operatorname{#1}}%

\global\long\def\bw{\mathsf{Ball\ walk}}%
\global\long\def\sw{\mathsf{Speedy\ walk}}%
\global\long\def\gw{\mathsf{Gaussian\ walk}}%
\global\long\def\ps{\mathsf{Proximal\ sampler}}%
\global\long\def\dw{\mathsf{Dikin\ walk}}%

\global\long\def\chr{\mathsf{Coordinate\ Hit\text{-}and\text{-}Run}}%
\global\long\def\har{\mathsf{Hit\text{-}and\text{-}Run}}%
\global\long\def\gc{\mathsf{Gaussian\ cooling}}%
\global\long\def\ino{\mathsf{\mathsf{In\text{-}and\text{-}Out}}}%
\global\long\def\tgc{\mathsf{Tilted\ Gaussian\ cooling}}%
\global\long\def\PS{\mathsf{PS}}%
\global\long\def\psunif{\mathsf{PS}_{\textup{unif}}}%
\global\long\def\psexp{\mathsf{PS}_{\textup{exp}}}%
\global\long\def\psann{\mathsf{PS}_{\textup{ann}}}%
\global\long\def\psgauss{\mathsf{PS}_{\textup{Gauss}}}%
\global\long\def\eval{\mathsf{Eval}}%
\global\long\def\mem{\mathsf{Mem}}%
\global\long\def\HAR{\mathsf{HAR}}%
\global\long\def\CHAR{\mathsf{CHAR}}%
\global\long\def\HR{\msf{HR}}%

\global\long\def\O{O}%
\global\long\def\Otilde{\widetilde{O}}%
\global\long\def\Omtilde{\widetilde{\Omega}}%

\global\long\def\E{\mathbb{E}}%
\global\long\def\Z{\mathbb{Z}}%
\global\long\def\P{\mathbb{P}}%
\global\long\def\N{\mathbb{N}}%

\global\long\def\R{\mathbb{R}}%
\global\long\def\Rd{\mathbb{R}^{d}}%
\global\long\def\Rdd{\mathbb{R}^{d\times d}}%
\global\long\def\Rn{\mathbb{R}^{n}}%
\global\long\def\Rnn{\mathbb{R}^{n\times n}}%

\global\long\def\psd{\mathbb{S}^{d}_{+}}%
\global\long\def\pd{\mathbb{S}^{d}_{++}}%

\global\long\def\defeq{\stackrel{\mathrm{{\scriptscriptstyle def}}}{=}}%

\global\long\def\veps{\varepsilon}%
\global\long\def\lda{\lambda}%
\global\long\def\vphi{\varphi}%
\global\long\def\K{\mathcal{K}}%

\global\long\def\half{\frac{1}{2}}%
\global\long\def\nhalf{\nicefrac{1}{2}}%
\global\long\def\texthalf{{\textstyle \frac{1}{2}}}%
\global\long\def\ltwo{L^{2}}%

\global\long\def\ind{\mathds{1}}%
\global\long\def\op{\mathsf{op}}%
\global\long\def\ch{\mathsf{Ch}}%
\global\long\def\kls{\mathsf{KLS}}%
\global\long\def\ts{\mathsf{Ts}}%
\global\long\def\hs{\textup{HS}}%
\global\long\def\ls{\textup{LS}}%

\global\long\def\cpi{C_{\mathsf{PI}}}%
\global\long\def\cipi{C_{\mathsf{IPI}}}%
\global\long\def\clsi{C_{\mathsf{LSI}}}%
\global\long\def\cch{C_{\mathsf{Ch}}}%
\global\long\def\clch{C_{\mathsf{logCh}}}%
\global\long\def\cexp{C_{\mathsf{exp}}}%
\global\long\def\cgauss{C_{\mathsf{Gauss}}}%

\global\long\def\chooses#1#2{_{#1}C_{#2}}%

\global\long\def\frob{\on F}%

\global\long\def\vol{\on{vol}}%

\global\long\def\sym{\on{sym}}%

\global\long\def\law{\on{law}}%

\global\long\def\tr{\on{tr}}%

\global\long\def\diag{\on{diag}}%

\global\long\def\diam{\on{diam}}%

\global\long\def\poly{\on{poly}}%

\global\long\def\polylog{\on{polylog}}%

\global\long\def\Diag{\on{Diag}}%

\global\long\def\inter{\on{int}}%

\global\long\def\esssup{\on{ess\,sup}}%

\global\long\def\proj{\on{Proj}}%

\global\long\def\e{\mathrm{e}}%

\global\long\def\id{\mathrm{id}}%

\global\long\def\supp{\on{supp}}%

\global\long\def\spanning{\on{span}}%

\global\long\def\rows{\on{row}}%

\global\long\def\cols{\on{col}}%

\global\long\def\rank{\on{rank}}%

\global\long\def\T{\mathsf{T}}%

\global\long\def\bs#1{\boldsymbol{#1}}%

\global\long\def\eu#1{\EuScript{#1}}%

\global\long\def\mb#1{\mathbf{#1}}%

\global\long\def\mbb#1{\mathbb{#1}}%

\global\long\def\mc#1{\mathcal{#1}}%

\global\long\def\mf#1{\mathfrak{#1}}%

\global\long\def\ms#1{\mathscr{#1}}%

\global\long\def\mss#1{\mathsf{#1}}%

\global\long\def\msf#1{\mathsf{#1}}%

\global\long\def\textint{{\textstyle \int}}%
\global\long\def\Dd{\mathrm{D}}%
\global\long\def\D{\mathrm{d}}%
\global\long\def\grad{\nabla}%
 
\global\long\def\hess{\nabla^{2}}%
 
\global\long\def\lapl{\triangle}%
 
\global\long\def\deriv#1#2{\frac{\D#1}{\D#2}}%
 
\global\long\def\pderiv#1#2{\frac{\partial#1}{\partial#2}}%
 
\global\long\def\de{\partial}%
\global\long\def\lagrange{\mathcal{L}}%
\global\long\def\Div{\on{div}}%

\global\long\def\Gsn{\mathcal{N}}%
 
\global\long\def\BeP{\textnormal{BeP}}%
 
\global\long\def\Ber{\textnormal{Ber}}%
 
\global\long\def\Bern{\textnormal{Bern}}%
 
\global\long\def\Bet{\textnormal{Beta}}%
 
\global\long\def\Beta{\textnormal{Beta}}%
 
\global\long\def\Bin{\textnormal{Bin}}%
 
\global\long\def\BP{\textnormal{BP}}%
 
\global\long\def\Dir{\textnormal{Dir}}%
 
\global\long\def\DP{\textnormal{DP}}%
 
\global\long\def\Exp{\textnormal{Exp}}%
 
\global\long\def\Gam{\textnormal{Gamma}}%
 
\global\long\def\GEM{\textnormal{GEM}}%
 
\global\long\def\HypGeo{\textnormal{HypGeo}}%
 
\global\long\def\Mult{\textnormal{Mult}}%
 
\global\long\def\NegMult{\textnormal{NegMult}}%
 
\global\long\def\Poi{\textnormal{Poi}}%
 
\global\long\def\Pois{\textnormal{Pois}}%
 
\global\long\def\Unif{\textnormal{Unif}}%

\global\long\def\bpar#1{\bigl(#1\bigr)}%
\global\long\def\Bpar#1{\Bigl(#1\Bigr)}%

\global\long\def\abs#1{|#1|}%
\global\long\def\babs#1{\bigl|#1\bigr|}%
\global\long\def\Babs#1{\Bigl|#1\Bigr|}%

\global\long\def\snorm#1{\|#1\|}%
\global\long\def\bnorm#1{\bigl\Vert#1\bigr\Vert}%
\global\long\def\Bnorm#1{\Bigl\Vert#1\Bigr\Vert}%

\global\long\def\sbrack#1{[#1]}%
\global\long\def\bbrack#1{\bigl[#1\bigr]}%
\global\long\def\Bbrack#1{\Bigl[#1\Bigr]}%

\global\long\def\sbrace#1{\{#1\}}%
\global\long\def\bbrace#1{\bigl\{#1\bigr\}}%
\global\long\def\Bbrace#1{\Bigl\{#1\Bigr\}}%

\global\long\def\Abs#1{\left\lvert #1\right\rvert }%
\global\long\def\Par#1{\left(#1\right)}%
\global\long\def\Brack#1{\left[#1\right]}%
\global\long\def\Brace#1{\left\{  #1\right\}  }%

\global\long\def\inner#1{\langle#1\rangle}%
 
\global\long\def\binner#1#2{\left\langle {#1},{#2}\right\rangle }%

\global\long\def\norm#1{\lVert#1\rVert}%
\global\long\def\onenorm#1{\norm{#1}_{1}}%
\global\long\def\twonorm#1{\norm{#1}_{2}}%
\global\long\def\infnorm#1{\norm{#1}_{\infty}}%
\global\long\def\fronorm#1{\norm{#1}_{\text{F}}}%
\global\long\def\nucnorm#1{\norm{#1}_{*}}%
\global\long\def\staticnorm#1{\|#1\|}%
\global\long\def\statictwonorm#1{\staticnorm{#1}_{2}}%

\global\long\def\mmid{\mathbin{\|}}%

\global\long\def\otilde#1{\widetilde{O}(#1)}%
\global\long\def\wtilde{\widetilde{W}}%
\global\long\def\wt#1{\widetilde{#1}}%

\global\long\def\KL{\msf{KL}}%
\global\long\def\dtv{d_{\textrm{\textup{TV}}}}%
\global\long\def\FI{\msf{FI}}%
\global\long\def\tv{\msf{TV}}%
\global\long\def\TV{\msf{TV}}%

\global\long\def\cov{\on{cov}}%
\global\long\def\var{\on{Var}}%
\global\long\def\ent{\on{Ent}}%

\global\long\def\cred#1{\textcolor{red}{#1}}%
\global\long\def\cblue#1{\textcolor{blue}{#1}}%
\global\long\def\cgreen#1{\textcolor{green}{#1}}%
\global\long\def\ccyan#1{\textcolor{cyan}{#1}}%
\global\long\def\yk#1{\textcolor{red}{\textsf{[YK: #1]}}}%
\global\long\def\yb#1{\textcolor{blue}{\textsf{[yb: #1]}}}%

\global\long\def\iff{\Leftrightarrow}%
 
\global\long\def\textfrac#1#2{{\textstyle \frac{#1}{#2}}}%

\global\long\def\Expo{\textnormal{Expo}}%
\global\long\def\Tr{\on{Tr}}%
\global\long\def\onu{\bar{\nu}}%
\global\long\def\intk{\inter\K}%
\global\long\def\ncal{\mathcal{N}}%
\global\long\def\svec{\operatorname{svec}}%
\global\long\def\tvec{\operatorname{vec}}%
\global\long\def\del{\partial}%
\global\long\def\ovec{\operatorname{ovec}}%

\global\long\def\Sph{\mathbb{S}}%
\global\long\def\gap{\operatorname{gap}}%
\global\long\def\Id{\mathrm{Id}}%
\global\long\def\logp{\log_{+}}%

\title{Spectral Gaps of Hit-and-Run and Coordinate Hit-and-Run\author{Yunbum Kook\\ University of Michigan\\  \texttt{ybkook@umich.edu} \and Santosh S. Vempala\\ Georgia Tech\\ \texttt{vempala@gatech.edu}}}
\maketitle
\begin{abstract}
For any convex body $\mathcal{K}\subset\mathbb{R}^{n}$ containing
a unit ball, the spectral gap of Hit-and-Run is $\Omega(1/(n^{2}C_{\mathsf{PI}}))$,
where $C_{\mathsf{PI}}$ is the Poincar\'e constant of the uniform
distribution $\pi$ over $\mathcal{K}$. This implies that Hit-and-Run
converges to a distribution within $\chi^{2}$-divergence $\varepsilon$
of the uniform distribution $\pi$ in $O(n^{2}C_{\mathsf{PI}}\log(M/\varepsilon))$
steps from any starting distribution $\pi_{0}$ with $M=\chi^{2}(\pi_{0}\,\|\,\pi)$,
thus refining the known bound of $O(n^{2}R^{2}\log(M/\varepsilon))$
by Lov\'asz and Vempala (2004) in terms of the outer radius $R$;
for nearly isotropic bodies, together with progress on the KLS conjecture,
the complexity is $O(n^{2}\log n\log(M/\varepsilon))$, improving
the dimension dependence from cubic to nearly quadratic while maintaining
logarithmic dependence on the initial distance. It was an open problem
to connect the convergence of Hit-and-Run to Poincar\'e/KLS constants
as was done for the Ball walk by Kannan, Lov\'asz and Simonovits
(1997). Unlike Hit-and-Run, the Ball walk has an unavoidable linear
dependence on (a stronger notion) of the initial warmness.

We directly bound the spectral gap of the Hit-and-Run Markov chain
by connecting it to functional isoperimetric constants, inspired by
the recent analysis of In-and-Out. Rewriting the spectral gap first
in terms of dual certificates leads to the Babu\v{s}ka--Aziz constant
studied in the analysis of PDEs; it is asymptotically bounded by the
Improved Poincar\'e constant of the target distribution, which we
show can be bounded in terms of the usual Poincar\'e constant. The
proof is based on duality and calculus, unlike known proofs of convergence
for Hit-and-Run which are based on bounding the conductance. The same
technique can be applied to Coordinate Hit-and-Run, resulting in a
much improved mixing time of $O(n^{3}C_{\mathsf{PI}}\log(M/\varepsilon))$.
\end{abstract}
\vfill{}
\noindent\textbf{AI disclosure: }A preliminary version of the main technique/proof/figure
was suggested by GPT 5.6 Pro through guided prompting by the first
author. The authors then verified, streamlined and wrote the full
proof. 

\newpage{}

\section{Introduction}

Sampling a convex body is a fundamental algorithmic problem. Hit-and-Run,
proposed independently by Boneh and Golan \cite{BG79constraint} and
Smith \cite{Smith84HAR}, is the following Markov chain (see Figure~\ref{fig:har-char})
when applied to a convex body $\K\subset\R^{n}$: at a current point
$x\in\K$, 
\begin{enumerate}
\item Sample a uniformly distributed random line $\ell$ through $x$.
\item Go to a uniform random point on the chord $\ell\cap\K$. 
\end{enumerate}
$\har$ is an attractive, easy-to-implement process that has been
widely used in practice. Coordinate Hit-and-Run ($\CHAR$), also known
as Gibbs sampling and introduced in 1971 \cite{Turchin71computation},
is the variant that only uses axis-parallel chords and is also popular
due to its low memory overhead. 

Building on the seminal work of \cite{Lovasz99hit}, Lov\'asz and
Vempala \cite{LV06hit} showed that $\har$ mixes from a cold start:
in any convex body $\K\subset\R^{n}$ with $B^{n}\subseteq\K\subseteq RB^{n}$,
it outputs a point within $\chi^{2}$-divergence $\varepsilon$ of
the uniform distribution over $\K$ in $O(n^{2}R^{2}\log(M/\varepsilon))$
steps from any starting distribution within $\chi^{2}$-divergence
$M$ of the target. This convergence rate is asymptotically optimal
in terms of these parameters, as shown by a cylinder whose cross section
is a unit ball and axis has length $2R$. It also means that $\har$
mixes rapidly from any interior starting point (by applying the convergence
bound to the distribution obtained after taking one step), the first
random walk known to have this property for arbitrary convex bodies.
More recently, $\CHAR$ was shown to have (higher) polynomial convergence
rates \cite{LV23char,NS22char,NRS25sampling}, with the last paper
showing logarithmic dependence on the warm start parameter.

The study of efficiently sampling convex bodies has made much progress
over the past few decades, starting from an initial polynomial-time
algorithm with complexity $n^{23}$ by Dyer, Frieze, and Kannan \cite{DFK91random}.
The $\bw$ introduced by Lov\'asz \cite{lovasz90compute} has been
particularly fruitful, leading to the current best complexity for
sampling, general techniques for the analysis of Markov chains \cite{LS90mixing,LS93random,KLS97random,LV07geometry},
and novel classes of isoperimetric inequalities. One highlight is
the Kannan--Lov\'asz--Simonovits (KLS) conjecture \cite{KLS95isop}
which posits that halfspaces are asymptotically optimal isoperimetric
cuts for convex bodies (and logconcave densities). The conjecture
directly leads to a better bound on the convergence rate of the $\bw$,
and has several other interesting mathematical consequences and connections.
Another highlight is the localization method, first developed by Kannan,
Lov\'asz, and Simonovits \cite{LS93random,KLS95isop} to prove inequalities
in high dimension and later generalized to a stochastic method by
Eldan \cite{eldan13thin}. For a more comprehensive account of these
developments, we refer the reader to \cite{vempala05geometric,KV25localization}.

The $\bw$ has a polynomial dependence on the warmness parameter\footnote{In fact, the $\bw$ uses a more stringent notion of warmness, namely
$\D\pi_{0}/\D\pi\le M$.}, and this is unavoidable. To get around this, researchers have developed
algorithmic solutions, notably annealing \cite{LV06simulated,KV06simulated,CV18Gaussian},
to arrange an $O(1)$-warm start. From such a warm start, the $\bw$'s
convergence rate is roughly $n^{2}\,\norm{\cov}_{\op}\,\psi^{2}_{n}$,
where $\norm{\cov}_{\op}$ is the operator norm of the covariance
of the target distribution, and $\psi_{n}$ is the KLS constant, which
is known to be $O(\sqrt{\log n})$ and is conjectured to be $O(1)$.
This is always a better bound than $n^{2}R^{2}$ and is asymptotically
better by a factor of $n$ when the target is isotropic (i.e., $\norm{\cov}_{\op}=1)$.
Note that for convex bodies and logconcave densities, we have $\cpi\simeq\norm{\cov}_{\op}\psi^{2}_{n}$,
where the RHS is the squared reciprocal of the Cheeger constant.

On the other hand, despite its logarithmic dependence on the warmness
parameter, there was no known connection between the convergence of
$\har$ and the KLS constant. Thus, a decade of progress on the latter,
leading to improved complexities for sampling (near-)isotropic convex
bodies, did not refine the bound for $\har$. Unlike the $\bw$, whose
analysis directly uses Euclidean isoperimetry, the analysis of $\har$
was based on an average isoperimetric inequality for the cross-ratio
distance, and the latter was already the best possible. Thus, it remained
a tantalizing open problem to bridge the analysis of $\har$ with
improved isoperimetric constants. 

Chen and Eldan \cite{CE25hitandrun} made a conceptually important
breakthrough, showing that for isotropic convex bodies, the convergence
rate is indeed nearly quadratic in the dimension, albeit with a polynomial
dependence on initial warmness and the distance to the target. Interestingly,
their proof technique, called a localization scheme \cite{CE25localization},
used a stochastic localization process to reduce the analysis from
general convex bodies to that of a highly concentrated Gaussian restricted
to a convex body; this general stochastic technique has also been
the driving force for much of the progress in bounding the KLS constant.
So, although they lost the logarithmic dependence on the warmness
parameter, they showed that the conjectured dimension dependence is
indeed plausible and valid in the isotropic setting.

\begin{figure}[t]
\centering
% =========================
% Hit-and-Run
% =========================
\begin{minipage}{0.49\textwidth}
\centering
{\large\bfseries Hit-and-Run}
\vspace{4pt}
\begin{tikzpicture}[scale=1.15]
  % Convex polytope
  \filldraw[
    thick,
    fill=blue!8,
    draw=black!60
  ]
  (-1.95,-0.45) --
  (-1.45,0.90) --
  (-0.35,1.20) --
  (1.25,0.95) --
  (1.90,0.25) --
  (1.45,-0.90) --
  (0.25,-1.15) --
  (-1.35,-0.95) -- cycle;

  % Current point x and sampled point y
  \coordinate (x) at (-0.65,-0.25);
  \coordinate (y) at (0.65,0.35);

  % Unit sphere around x
  \draw[thin, black!60]
    (x) circle (0.48);

  % Random line ell extending beyond the body
  \draw[dashed, thick]
    (-2.40,-1.058) -- (2.40,1.158);

  % x label moved southeast
  \fill (x) circle (1.6pt)
    node[below right=2pt] {$x$};

  % y label moved northwest
  \fill (y) circle (1.6pt)
    node[above left=1pt] {$y$};

  % Line label
  \node[above right] at (1.5,1) {$\ell$};

  % Transition description
  \node at (0,-1.48)
    {$y\sim \operatorname{Unif}(\ell\cap\K)$};
\end{tikzpicture}
\end{minipage}
\hfill
% =========================
% Coordinate Hit-and-Run
% =========================
\begin{minipage}{0.49\textwidth}
\centering
{\large\bfseries Coordinate Hit-and-Run}
\vspace{4pt}
\begin{tikzpicture}[scale=1.15]
  % Convex polytope
  \filldraw[
    thick,
    fill=blue!8,
    draw=black!60
  ]
  (-1.95,-0.45) --
  (-1.45,0.90) --
  (-0.35,1.20) --
  (1.25,0.95) --
  (1.90,0.25) --
  (1.45,-0.90) --
  (0.25,-1.15) --
  (-1.35,-0.95) -- cycle;

  % Current point x and sampled point y
  \coordinate (x) at (-0.65,0);
  \coordinate (y) at (0.75,0);

  % Axis-parallel lines through x
  \draw[dashed, thick]
    (-2.40,0) -- (2.40,0);
  \draw[dashed, thick]
    (-0.65,-1.40) -- (-0.65,1.40);

  % Current and sampled points
  \fill (x) circle (1.6pt)
    node[below=3pt] {$x$};
  \fill (y) circle (1.6pt)
    node[below=3pt] {$y$};

  % Coordinate-line label
  \node[above=5pt] at (2.2,-0.2) {$\ell_i$};

  % Transition description
  \node at (0,-1.48)
    {$y\sim \operatorname{Unif}(\ell_i\cap\K)$};
\end{tikzpicture}
\end{minipage}
\caption{
\textbf{Left}: A HAR kernel at $x \in \K$ chooses a random line $\ell$ through $x$ and samples $y$ uniformly from the chord $\ell\cap\K$.
\textbf{Right}: A CHAR kernel chooses an axis-parallel line $\ell_i$ through $x$ and samples $y$ uniformly from the chord $\ell_i\cap\K$.
}
\label{fig:har-char}
\end{figure}

Recent work \cite{KVZ26INO,KV25sampling,KV26zeroLC} took a different
approach inspired by continuous diffusion, and showed that the $\ino$
random walk has a convergence rate that matches the best-known bounds
for the $\bw$, with stronger output guarantees and, importantly,
a novel proof framework that directly connects convergence (in fact,
per-step contraction towards the target) with functional isoperimetric
constants of the target distribution. Nevertheless, $\ino$ also has
an unavoidable linear dependence on the warmness parameter, highlighting
the open problem about $\har$:

\smallskip{}

\emph{Can the convergence rate of $\har$ be bounded by the Poincar\'e/KLS
constants while maintaining its logarithmic dependence on the warmness
parameter?}

\smallskip{}

In this paper, we answer this question affirmatively for both $\har$
and $\CHAR$. The proof technique bypasses the traditional method
of bounding the conductance and directly connects the convergence
rate to functional isoperimetric constants. 

The main high-level idea of the proof is to produce a dual certificate
bounding the spectral gap. Rewriting the spectral gap first in terms
of dual certificates leads to the Babu\v{s}ka--Aziz constant \cite{BA72survey,HP83inequalities}
studied in the analysis of PDEs; for bounded domains, this is known
to be asymptotically bounded by the Improved Poincar\'e constant
of the target density. The latter is a refined version of the classical
Poincar\'e constant, incorporating distance to the boundary as a
weight function for points in the domain \cite{HurriSyrjanen1994,Costabel2017,Zsuppan2020}.
We show that this constant can be bounded in terms of the usual Poincar\'e
constant for convex bodies resulting in a mixing time bound of $O(n^{2}\cpi\log(M/\varepsilon))$.
Applying this to $\CHAR$, with the only change being in the first
step of connecting the spectral gap to the Babu\v{s}ka--Aziz constant,
results in a mixing time bound of $O(n^{3}\cpi\log(M/\varepsilon))$,
which is conjectured to be asymptotically tight. We note that under
this new analytical framework, the proofs for $\har$ and $\CHAR$
are essentially the same.

We discuss the proof techniques in more detail after stating the main
results. 

\subsection{Results}

Below, $\cpi(\pi)$ denotes the Poincar\'e constant of a probability
measure $\pi$, formally defined in \S\ref{sec:prelim}.
\begin{thm}[Spectral gaps]
\label{thm:main} Let $n\geq2$ and $\K\subset\R^{n}$ be a bounded,
open, and convex set with unit ball inside, and let $\pi$ be the
uniform distribution over $\K$. Then, the spectral gaps of Hit-and-Run
and Coordinate Hit-and-Run satisfy
\[
\lda_{\HR}\gtrsim\frac{1}{n^{2}\cpi(\pi)}\gtrsim\frac{1}{n^{2}\,\norm{\cov\pi}_{\op}\log n}\,,\quad\lda_{\CHAR}\gtrsim\frac{1}{n^{3}\cpi(\pi)}\gtrsim\frac{1}{n^{3}\,\norm{\cov\pi}_{\op}\log n}\,.
\]
\end{thm}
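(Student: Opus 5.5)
The plan follows the route sketched in the introduction: first a dual characterization of the spectral gap, then a Babu\v{s}ka--Aziz type inequality, then a reduction of the Improved Poincar\'e constant to $\cpi$.

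\textbf{Step 1: Dirichlet form and a dual certificate.} The HAR kernel is reversible with respect to $\pi$. Its Dirichlet form is
\[
\mc E(f,f)=\tfrac12\iint (f(x)-f(y))^2\,P(x,\D y)\,\pi(\D x).
\]
Fix $f$ with $\int f\,\D\pi=0$. I would find a vector field $V$ on $\K$ with
\[
\Div V=f \ \text{in}\ \K,\qquad V\cdot\nu=0 \ \text{on}\ \partial\K .
\]
Then
\[
\int f^2\,\D\pi=-\frac{1}{\vol\K}\int\inner{\grad f,V}.
\]
I would rewrite the right-hand side chordwise. Decompose $\R^n$ into directions $\theta\in\Sph^{n-1}$ and lines parallel to $\theta$. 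Along each chord, $\int\partial_\theta f\,V_\theta$ is controlled by differences $f(x)-f(y)$ within the chord, weighted by the chord length and by a primitive of $V_\theta$.

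Concretely, I would use the Cauchy--Schwarz inequality on each chord, together with the one-dimensional identity relating the uniform-on-chord kernel to $\int_{\text{chord}}(f(x)-f(y))^2$. The goal is a bound of the form
\[
\Big(\int f^2\,\D\pi\Big)^2\lesssim n^{2}\,\mc E(f,f)\cdot\int |V|^2 d_{\partial\K}^{-2}\cdots\,\D\pi .
\]
More precisely, the HAR line measure in polar form carries a factor $|x-y|^{-(n-1)}$. Converting $\int\inner{\grad f,V}$ into chord integrals produces a weight like $\ell(x,\theta)^2$. Averaged over directions, this should give $n^2$ times an $L^2$ norm of $V$ weighted so that it is controlled by $\norm{\grad V}_{L^2}$ or $\norm{V/d}_{L^2}$ with $d$ the distance to $\partial\K$. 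The factor $n^2$ should come from averaging over $\theta$: one factor $n$ from $\E\theta\theta^\T=I/n$ and one from the line-measure normalization.

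For CHAR, only the $n$ coordinate directions are used, each with probability $1/n$. This loses an extra factor $n$, giving $n^3$.

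\textbf{Step 2: Babu\v{s}ka--Aziz.} The quantity needed is the constant $C_{BA}$ in
\[
\inf_{V}\norm{\grad V}_{L^2}\le C_{BA}\norm{f}_{L^2}.
\]
This is the Bogovski\u{\i}/Babu\v{s}ka--Aziz constant, possibly in a variant weighted by $d$. Via the classical duality, $C_{BA}$ is bounded by the Improved Poincar\'e constant: the best $C$ in
\[
\norm{g-\textstyle\int g\,\D\pi}_{L^2(\pi)}^2\le C\norm{d\,\grad g}_{L^2(\pi)}^2 .
\]
This duality I take from the cited literature, adapting constants to $\pi$.

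\textbf{Step 3: Improved Poincar\'e from Poincar\'e.} This is where I expect the main obstacle. I would show $\cipi\lesssim\cpi$ for convex $\K\supseteq B^n$. The idea is to split $\K$ into a core $\{d\ge c/n\}$ and a boundary layer. In the core, $d\gtrsim 1/n$ is not good enough as stated, so I would instead exploit convexity. Because $\K\supseteq B^n$, the map $x\mapsto(1-t)x$ sends $\K$ into $\{d\ge t\}$. Comparing $g$ with $g\circ$ dilation would control the boundary layer by the interior using $d\,\grad g$ along rays, via a Hardy-type inequality on each ray. The interior would then be handled by $\cpi$ of the shrunken body, which is comparable to $\cpi(\pi)$.

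\textbf{Step 4: Conclusion.} The last inequality follows from $\cpi\lesssim\norm{\cov\pi}_{\op}\psi_n^2$ and Klartag's bound $\psi_n^2\lesssim\log n$.
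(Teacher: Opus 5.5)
Your plan has the right architecture (dual certificate, then Babu\v{s}ka--Aziz, then improved Poincar\'e, then Poincar\'e), but the dimension bookkeeping is wrong and rests on a false step. Step~3's target $\cipi\lesssim\cpi$ for convex $\K\supseteq B^{n}$ already fails for $\K=B^{n}$. Take $f(x)=x_{1}$. Then $\var_{\pi}f=\frac{1}{n+2}$ and $\int\delta^{2}\abs{\grad f}^{2}\,\D\pi=\E_{\pi}(1-\abs x)^{2}=\frac{2}{(n+1)(n+2)}$, so $\cipi(B^{n})\geq\frac{n+1}{2}$. Meanwhile $\cpi(B^{n})\asymp\frac{1}{n}$, and it is at most $4/\pi^{2}$ even by Payne--Weinberger. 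The true statement is $\cipi\lesssim n^{2}\cpi/r^{2}$, and the ball shows it is sharp. So the loss you wanted to avoid in the core, where $\delta$ is only $\gtrsim1/n$, cannot be avoided. The paper accepts this loss. It dilates by $\alpha=1-\frac{1}{2n}$ and uses $\sup_{\K}\abs x\lesssim n\sqrt{\cpi}$ and $\delta(y)\geq r\,(1-\norm y_{\K})$ to prove an $L^{1}$ weighted inequality with constant $n\sqrt{\cpi}/r$. It then upgrades to $L^{2}$ by applying that inequality to $u^{2}$ with $u=(f-\text{median})^{+}$, and likewise for the negative part.

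Since all of the $n^{2}$ is paid in Step~3, the comparison in Step~1 must be dimension-free for Hit-and-Run. Your allocation of $n^{2}$ to averaging over directions would, once Step~3 is corrected, give only $\lda_{\HR}\gtrsim1/(n^{4}\cpi)$. The missing idea is the explicit certificate $g_{\theta}=n\,\theta^{\T}\grad u\,\theta$ with $u\in H^{1}_{0}(\K;\R^{n})$, $\Div u=f$ and $\norm{\grad u}_{2}^{2}\leq C_{\msf{BA}}\norm f_{2}^{2}$. This requires $u=0$ on all of $\partial\K$. With only $V\cdot\nu=0$, the endpoint values $\inner{V,\theta}$ on a $\theta$-chord do not vanish unless $\theta\parallel\nu$, and your weight $\norm{V/\delta}_{2}$ is infinite unless $V$ vanishes on $\partial\K$. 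With full Dirichlet data, the fundamental theorem of calculus gives $P_{\theta}g_{\theta}=0$, and $\int g_{\theta}\,\D\sigma(\theta)=\tr\grad u=f$. Hence $\norm f_{2}^{2}=\int\inner{f-P_{\theta}f,g_{\theta}}_{\pi}\,\D\sigma(\theta)$. Now apply Cauchy--Schwarz, the exact moment $\E(\theta^{\T}M\theta)^{2}=\frac{(\tr M)^{2}+2\norm{\sym M}_{\frob}^{2}}{n(n+2)}$, and the identity $2\norm{\sym\grad u}_{2}^{2}=\norm{\grad u}_{2}^{2}+\norm{\Div u}_{2}^{2}$. These give $\int\norm{g_{\theta}}_{2}^{2}\,\D\sigma\leq\frac{n}{n+2}(2+C_{\msf{BA}})\norm f_{2}^{2}$, i.e.\ $\lda_{\HR}\gtrsim1/(1+C_{\msf{BA}})$; the factor $n$ in $g_{\theta}$ is exactly cancelled by the fourth moment rather than lost. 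For Coordinate Hit-and-Run, $g_{i}=n\,\partial_{i}u_{i}$ gives $\lda_{\CHAR}\geq1/(nC_{\msf{BA}})$, and this is the only place the extra $n$ enters. As written, your Step~1 is a heuristic (``should give'') that neither identifies this certificate nor proves any specific constant.
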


\begin{cor}
[Mixing time]\label{cor:mixing-har-char} Under the assumptions of
Theorem~\ref{thm:main}, let $\pi^{\HR}_{N}$ and $\pi^{\CHAR}_{N}$
be the distribution of the $N$-th iterate of Hit-and-Run and Coordinate
Hit-and-Run initialized at $\pi_{0}$, respectively. Then, for some
universal constant $c>0$
\begin{align*}
\chi^{2}(\pi^{\HR}_{N}\mmid\pi) & \leq\exp\bpar{-\frac{cN}{n^{2}\,\norm{\cov\pi}_{\op}\log n}}\,\chi^{2}(\pi_{0}\mmid\pi)\,,\\
\chi^{2}(\pi^{\CHAR}_{N}\mmid\pi) & \leq\exp\bpar{-\frac{cN}{n^{3}\,\norm{\cov\pi}_{\op}\log n}}\,\chi^{2}(\pi_{0}\mmid\pi)\,.
\end{align*}
In particular, for given $\veps>0$, the number of iterations for
Hit-and-Run and Coordinate Hit-and-Run to achieve $\veps$-distance
to $\pi$ in $\chi^{2}$-divergence is bounded as
\[
N_{\HR}\lesssim n^{2}\,\norm{\cov\pi}_{\op}\log n\log\frac{\chi^{2}(\pi_{0}\mmid\pi)}{\veps}\,,\qquad N_{\CHAR}\lesssim n^{3}\,\norm{\cov\pi}_{\op}\log n\log\frac{\chi^{2}(\pi_{0}\mmid\pi)}{\veps}\,.
\]
\end{cor}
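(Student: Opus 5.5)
The corollary is the standard consequence of Theorem~\ref{thm:main}, using that both kernels are reversible with respect to $\pi$ and positive semidefinite. The plan has three steps.

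\textbf{Step 1: reversibility and positivity.} Let $P$ denote either the $\har$ or the $\CHAR$ kernel. I will first record that $P$ is self-adjoint on $L^{2}(\pi)$. The $\har$ kernel has density proportional to $1/(\abs{x-y}^{n-1}\,\ell(x,y))$, where $\ell(x,y)$ is the chord length; this is symmetric in $(x,y)$, so $P$ is self-adjoint. For $\CHAR$, each coordinate kernel $P_{i}$ is the orthogonal projection onto functions that are constant on the chords parallel to $e_{i}$.

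Next I will show $P\succeq0$. For $\har$, write $P=\E_{\ell\text{-direction}}[\text{average over the chord}]$. For a fixed direction $u$, averaging along the chord in direction $u$ is the conditional expectation onto the $\sigma$-algebra of lines parallel to $u$. Hence, with respect to $\pi$, it is an orthogonal projection $Q_{u}$. Then $P=\E_{u}Q_{u}$, which is self-adjoint and positive semidefinite, and the same holds for $\CHAR=\frac1n\sum_{i}P_{i}$. Thus the spectrum of $P$ on $L^{2}_{0}(\pi)$ lies in $[0,1-\lda]$, where $\lda$ is the spectral gap of Theorem~\ref{thm:main}.

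\textbf{Step 2: one-step contraction and iteration.} Write $f_{N}=\D\pi_{N}/\D\pi-1\in L^{2}_{0}(\pi)$, so that $\chi^{2}(\pi_{N}\mmid\pi)=\norm{f_{N}}^{2}_{L^{2}(\pi)}$. By reversibility, $f_{N+1}=Pf_{N}$. Therefore
\[
\norm{f_{N+1}}^{2}\le(1-\lda)^{2}\norm{f_{N}}^{2}\le e^{-\lda}\norm{f_{N}}^{2},
\]
and iterating gives $\chi^{2}(\pi_{N}\mmid\pi)\le e^{-\lda N}\chi^{2}(\pi_{0}\mmid\pi)$. Positivity is convenient here but not essential: the general bound $\sup\abs{\text{spec}}\le 1-\lda$ still needs control of the bottom of the spectrum, and positivity supplies exactly that. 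Plugging in $\lda\gtrsim 1/(n^{2}\norm{\cov\pi}_{\op}\log n)$, or the $n^{3}$ version for $\CHAR$, from Theorem~\ref{thm:main} gives the displayed bounds.

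\textbf{Step 3: iteration count.} Setting the right-hand side equal to $\veps$ and solving gives $N\gtrsim \lda^{-1}\log(\chi^{2}(\pi_{0}\mmid\pi)/\veps)$.

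\textbf{Main obstacle.} The only nontrivial point is the positivity and self-adjointness of the $\har$ kernel, which I handle via the projection decomposition over directions. If $\pi_{0}$ is not absolutely continuous with respect to $\pi$, the bound is vacuous since the right-hand side is infinite.
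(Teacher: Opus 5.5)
Your proposal is correct and follows the paper's argument. The paper combines the spectral-gap bound of Theorem~\ref{thm:main} with the contraction estimate \eqref{eq:spectral-gap-to-chi2} for positive semidefinite reversible kernels, with positivity coming from writing $P_{\HR}=\int P_{\theta}\,\D\sigma(\theta)$ and $P_{\CHAR}=\frac{1}{n}\sum_{i}P_{i}$ as averages of conditional-expectation projections, exactly as you do. Your bound $(1-\lda)^{2}\le e^{-\lda}$ is a factor of $2$ weaker in the exponent than the paper's $e^{-2\lda}$, which is harmlessly absorbed into the universal constant $c$.
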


\subsection{Technical overview\label{sec:prelim}}

\subsubsection{Preliminaries}

We recall the functional-analytic facts used later, with more details
in \S\ref{sec:Functional-analytic-background}. 

\paragraph{Basics. }

Throughout the paper, we use the same symbol to denote a probability
measure and its Lebesgue density when it is clear from context, and
$\D\pi(x)=\frac{\D x}{\vol\K}$ denotes the uniform measure on $\K\subset\Rn$.
We write $\inner{f,g}_{\pi}:=\int_{\K}fg\,\D\pi$, $\norm f^{2}_{2}:=\inner{f,f}_{\pi}=\E_{\pi}[f^{2}]$,
and $L^{2}_{0}(\pi):=\{f\in L^{2}(\pi):\int_{\K}f\,\D\pi=0\}$, the
set of centered $L^{2}$ functions. The mean and variance of $f\in L^{2}(\pi)$
are denoted as $\pi f:=\E_{\pi}f=\int_{\K}f\,\D\pi$ and $\var_{\pi}f=\norm{f-\pi f}^{2}_{2}$.

A function $f\in L^{2}(\K)$ has \emph{weak derivative} $\partial_{j}f\in L^{2}(\K)$
if $\int_{\K}f\,\partial_{j}\varphi\,\D x=-\int_{\K}(\partial_{j}f)\,\varphi\,\D x$
for every $\varphi\in C^{\infty}_{c}(\K)$. The \emph{Sobolev space}
$H^{1}(\K)$ consists of the $L^{2}$ functions whose weak derivatives
belong to $L^{2}$. Recall that it is a Hilbert space with $\inner{f,g}_{H^{1}}=\inner{f,g}_{L^{2}}+\inner{\nabla f,\nabla g}_{L^{2}}$.
We denote the closure of compactly-supported smooth functions with
respect to the $H^{1}$-norm by $H^{1}_{0}(\K):=\overline{C^{\infty}_{c}(\K)}^{H^{1}(\K)}$.
For a vector field $u=(u_{1},\ldots,u_{n})$, we use $H^{1}(\K;\R^{n}):=H^{1}(\K)^{n}$
and $\norm{\nabla u}^{2}_{2}=\sum^{n}_{i,j=1}\norm{\partial_{j}u_{i}}^{2}_{2}$. 

For $u\in H^{1}(\K;\R^{n})$, its \emph{weak divergence} is $\Div u:=\sum^{n}_{i=1}\partial_{i}u_{i}\in L^{2}(\K)$.
Equivalently, $\int_{\K}(\Div u)\,\varphi\,\D x=-\int_{\K}u\cdot\nabla\varphi\,\D x$
for any $\varphi\in C^{\infty}_{c}(\K)$. If $u\in H^{1}_{0}(\K;\R^{n})$,
approximation by compactly supported smooth vector fields gives $\int_{\K}\Div u\,\D\pi=0$. 

\paragraph{Functional inequalities.}

For $\delta(x):=\dist(x,\partial\K)=\min_{y\in\de\K}\abs{x-y}$ and
a probability measure $\pi$ over $\K$, its Poincar\'e constant
and improved Poincar\'e constant are defined as the smallest constants
satisfying that for any test function $f\in H^{1}(\K)$,
\begin{align}
\var_{\pi}f & \leq\cpi(\pi)\int_{\K}\abs{\nabla f}^{2}\,\D\pi\,,\tag{\ensuremath{\msf{PI}}}\label{eq:cpi-definition}\\
\var_{\pi}f & \leq\cipi(\pi)\int_{\K}\delta^{2}\,\abs{\nabla f}^{2}\,\D\pi\,.\tag{\ensuremath{\msf{IPI}}}\label{eq:ipi-definition}
\end{align}

\paragraph{Dirichlet form and spectral gap.}

Let $\mc G$ be a sub-$\sigma$-algebra of the Borel $\sigma$-algebra
on $\K$. Recall that the conditional expectation $P_{\mc G}f:=\E_{\pi}[f\mid\mc G]$
is the orthogonal projection of $L^{2}(\pi)$ onto the closed subspace
of $\mc G$-measurable functions:
\begin{equation}
P^{*}_{\mc G}=P_{\mc G}\,,\qquad P^{2}_{\mc G}=P_{\mc G}\,,\qquad\inner{f,(\Id-P_{\mc G})f}_{\pi}=\norm{f-P_{\mc G}f}^{2}_{2}\,.\label{eq:conditional-projection}
\end{equation}

For a $\pi$-reversible Markov operator $P$, its Dirichlet form is
$\mc E_{P}(f,f):=\inner{f,(\Id-P)f}_{\pi}$, and the spectral gap
of $P$ is defined as 
\[
\lda_{P}:=\inf_{0\neq f\in L^{2}_{0}(\pi)}\frac{\mc E_{P}(f,f)}{\norm f^{2}_{2}}\,.
\]
A positive spectral gap of positive semidefinite $P$ implies contraction
in $\chi^{2}$-divergence: if $q_{0}=\frac{\D\mu_{0}}{\D\pi}$ belongs
to $L^{2}(\pi)$ and $\mu_{N}=\mu_{0}P^{N}$, 
\begin{equation}
\chi^{2}(\mu_{N}\mmid\pi)=\norm{P^{N}(q_{0}-1)}^{2}_{2}\leq(1-\lda_{P})^{2N}\,\chi^{2}(\mu_{0}\mmid\pi)\leq\exp(-2\,\lda_{P}N)\,\chi^{2}(\mu_{0}\mmid\pi)\,.\label{eq:spectral-gap-to-chi2}
\end{equation}

\subsubsection{Proof ideas\label{subsec:Digesting-proof-ideas}}

We will define a bounded linear operator $T:F\to G$ for two Hilbert
spaces $F$ and $G$ so that $\norm f^{2}_{F}$ and $\norm{Tf}^{2}_{G}$
correspond to $\norm f^{2}_{2}$ and $\mc E_{P}(f,f)$ in our setting,
respectively. Then, our goal is to establish $\norm{Tf}^{2}_{G}\geq\lda\norm f^{2}_{F}$
for some $\lda>0$. A natural \emph{dual} approach is to find, for
each $f\in F$, a dual-certificate $g_{f}\in G$ such that $T^{*}g_{f}=f$
for the adjoint $T^{*}:G\to F$. If the certificate also satisfies
$\norm{g_{f}}^{2}_{G}\leq C\,\norm f^{2}_{F}$, then
\[
\norm f^{2}_{F}=\inner{f,f}_{F}=\inner{f,T^{*}g_{f}}_{F}=\inner{Tf,g_{f}}_{G}\leq\norm{Tf}_{G}\,\norm{g_{f}}_{G}\leq\sqrt{C}\,\norm{Tf}_{G}\,\norm f_{F}\,,
\]
which implies $\norm f^{2}_{F}\leq C\,\norm{Tf}^{2}_{G}$, and it
leads to $\lda\geq C^{-1}$. In summary, we should tackle two concrete
problems: $(i)$ find the dual certificate $g_{f}$ such that $T^{*}g_{f}=f$,
and $(ii)$ show $\norm{g_{f}}^{2}_{G}\leq C\,\norm f^{2}_{F}$ for
some $C>0$.

We now illustrate the main idea through Coordinate Hit-and-Run, whose
finite product structure makes the adjoint transparent. Fix an orthonormal
basis $e_{1},\ldots,e_{n}$, and let $P_{i}$ be conditional expectation
given all coordinates except the $i$-th one; equivalently, $P_{i}f$
is the uniform average of $f$ on each $e_{i}$-parallel chord. By
\eqref{eq:conditional-projection}, 
\begin{align*}
P_{\CHAR} & =\frac{1}{n}\sum^{n}_{i=1}P_{i}\,,\\
\mc E_{\CHAR}(f,f) & :=\inner{f,(\Id-P_{\CHAR})f}_{\pi}=\frac{1}{n}\sum^{n}_{i=1}\norm{f-P_{i}f}^{2}_{2}\,.
\end{align*}

In this case, we can take $F:=L^{2}_{0}(\pi)$ and $G:=\{g=(g_{1},\dots,g_{n}):g_{i}\in L^{2}(\pi)\}$
with norm $\norm f^{2}_{F}=\norm f^{2}_{2}$ and $\norm g^{2}_{G}=\frac{1}{n}\sum^{n}_{i=1}\norm{g_{i}}^{2}_{2}$.
Also, we define $T:F\to G$ by $(Tf)_{i}=(\Id-P_{i})f$ for $i\in[n]$,
so $\norm{Tf}^{2}_{G}=\mc E_{\CHAR}(f,f)$ as desired. Let us now
compute the adjoint $T^{*}$. For $g=(g_{1},\dots,g_{n})\in G$, using
$P_{i}=P^{*}_{i}$,
\[
\inner{Tf,g}_{G}=\frac{1}{n}\sum_{i}\inner{(\Id-P_{i})f,g_{i}}_{\pi}=\frac{1}{n}\sum_{i}\inner{f,(\Id-P_{i})g_{i}}_{\pi}\,.
\]
Hence, $T^{*}g=\frac{1}{n}\sum_{i}(\Id-P_{i})g_{i}$. 

With the dual-certificate approach in mind, we address the first problem
of finding $g$ such that $T^{*}g=f$. To this end, we could consider
two sufficient conditions given as 
\[
P_{i}g_{i}=0\quad\text{for }i\in[n]\qquad\&\qquad\frac{1}{n}\sum^{n}_{i=1}g_{i}=f\,.
\]
A convenient way to enforce the first condition is to set $g_{i}=n\,\de_{i}u_{i}$
for each $i\in[n]$ for some $u\in H^{1}_{0}(\K;\R^{n})$ extended
by zero outside of $\K$. Indeed, for a.e. $z\in e^{\perp}_{i}$,
if $I_{z}:=\{s\in\R:z+se_{i}\in\K\}$, then for $t\in I_{z}$, the
fundamental theorem of calculus leads to
\[
(P_{i}g_{i})(z+te_{i})=\frac{n}{\abs{I_{z}}}\int_{I_{z}}\partial_{i}u_{i}(z+se_{i})\,\D s=0\,.
\]
Then, the second condition corresponds to $\Div u=f$. In summary,
the first problem comes down to finding $u\in H^{1}_{0}(\K;\R^{n})$
such that $\Div u=f$.

We now address the second problem of showing $\norm g^{2}_{G}\leq C\,\norm f^{2}_{F}$
for some $C$. Since 
\[
\norm g^{2}_{G}=\frac{1}{n}\sum^{n}_{i=1}\norm{g_{i}}^{2}_{2}=n\sum^{n}_{i=1}\norm{\de_{i}u_{i}}^{2}_{2}\leq n\,\norm{\nabla u}^{2}_{2}\,,
\]
the second problem can be reduced to the following optimization problem:
among functions $u$ with $\Div u=f$, find a function $u$ that minimizes
$\norm{\nabla u}^{2}_{2}$. Namely, one should minimize the Frobenius
norm $\norm{\grad u}_{2}$ with trace constraint $\Div u=\tr(\grad u)=f$.
This is where the Babu\v{s}ka--Aziz constant comes in:
\[
C_{\msf{BA}}(\pi):=\sup_{0\neq f\in L^{2}_{0}(\pi)}\,\inf_{\substack{u\in H^{1}_{0}(\K;\R^{n})\\
\Div u=f
}
}\frac{\norm{\nabla u}^{2}_{2}}{\norm f^{2}_{2}}\,.
\]
The minimizer satisfies $\Div u=f$ and $\norm{\nabla u}^{2}_{2}\leq C_{\msf{BA}}(\pi)\,\norm f^{2}_{2}$.
Consequently, $\norm g^{2}_{G}\leq nC_{\msf{BA}}(\pi)\,\norm f^{2}_{F}$,
and the dual argument gives $\lda_{\CHAR}\geq\frac{1}{nC_{\msf{BA}}(\pi)}$.
We formalize this functional analytic result as follows.
\begin{lem}
\label{lem:known-divergence} Suppose $C_{\msf{BA}}(\pi)<\infty$.
Then, for every $f\in L^{2}_{0}(\pi)$, there exists $u\in H^{1}_{0}(\K;\R^{n})$
such that 
\begin{equation}
\Div u=f\,,\qquad\mbox{ and }\qquad\norm{\nabla u}^{2}_{2}\leq C_{\msf{BA}}(\pi)\,\norm f^{2}_{2}\,.\label{eq:div-solver-known}
\end{equation}
\end{lem}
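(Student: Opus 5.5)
The point of the lemma is that the infimum in the definition of $C_{\msf{BA}}(\pi)$ is attained, not merely approached. I would prove this by a direct minimization in a Hilbert space.

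Fix $f\in L^{2}_{0}(\pi)$ with $f\neq0$; the case $f=0$ is handled by $u=0$. Consider the affine set
\[
\mc A_{f}:=\{u\in H^{1}_{0}(\K;\R^{n}):\Div u=f\}\,.
\]
Since $C_{\msf{BA}}(\pi)<\infty$, the inner infimum defining $C_{\msf{BA}}(\pi)$ at this $f$ is finite. In particular $\mc A_{f}$ is nonempty, and
\[
m:=\inf_{u\in\mc A_{f}}\norm{\nabla u}^{2}_{2}\leq C_{\msf{BA}}(\pi)\,\norm f^{2}_{2}\,.
\]
It therefore suffices to show that $m$ is attained by some $u\in\mc A_{f}$.

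\textbf{Compactness.} Take a minimizing sequence $u^{(k)}\in\mc A_{f}$ with $\norm{\nabla u^{(k)}}^{2}_{2}\to m$. The Euclidean Poincar\'e (Friedrichs) inequality holds on $H^{1}_{0}(\K)$ because $\K$ is bounded. Applied componentwise, it shows that $\norm{\nabla\cdot}_{2}$ is a norm on $H^{1}_{0}(\K;\R^{n})$ equivalent to the $H^{1}$-norm. Hence $(u^{(k)})$ is bounded in the Hilbert space $H^{1}_{0}(\K;\R^{n})$. Passing to a subsequence, $u^{(k)}\rightharpoonup u$ weakly in $H^{1}_{0}(\K;\R^{n})$, and $u$ lies there because a closed subspace is weakly closed.

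\textbf{Closedness of the constraint and lower semicontinuity.} The map $u\mapsto\Div u$ is bounded and linear from $H^{1}_{0}(\K;\R^{n})$ to $L^{2}(\K)$, since $\norm{\Div u}_{2}\leq\sqrt{n}\,\norm{\nabla u}_{2}$. Bounded linear maps are weak-to-weak continuous, so $\Div u^{(k)}\rightharpoonup\Div u$ in $L^{2}$. Each $\Div u^{(k)}$ equals $f$, so $\Div u=f$ and $u\in\mc A_{f}$. Alternatively, one can pass to the limit directly in the identity $\int(\Div u^{(k)})\varphi\,\D x=-\int u^{(k)}\cdot\nabla\varphi\,\D x$ for each test function $\varphi\in C^{\infty}_{c}(\K)$. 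Finally, $u\mapsto\norm{\nabla u}^{2}_{2}$ is a continuous convex functional, hence weakly lower semicontinuous. This gives $\norm{\nabla u}^{2}_{2}\leq\liminf_{k}\norm{\nabla u^{(k)}}^{2}_{2}=m\leq C_{\msf{BA}}(\pi)\,\norm f^{2}_{2}$, which is \eqref{eq:div-solver-known}.

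\textbf{A projection view and the main obstacle.} Equivalently, $\mc A_{f}$ is a nonempty closed affine subspace of the Hilbert space $(H^{1}_{0}(\K;\R^{n}),\inner{\nabla\cdot,\nabla\cdot})$, being a translate of the kernel of a bounded operator. The minimizer is then the orthogonal projection of $0$ onto $\mc A_{f}$, which also shows it is unique. The only step needing care is that $\norm{\nabla\cdot}_{2}$ is coercive on $H^{1}_{0}$, and the Friedrichs inequality on the bounded set $\K$ supplies this. Note that the argument never uses the normalization $\D\pi=\D x/\vol\K$, which only rescales both sides of the inequality equally.
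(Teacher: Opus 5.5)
Your proof is correct and essentially matches the paper's. The paper also makes $\norm{\nabla\cdot}_{2}$ an equivalent Hilbert norm on $H^{1}_{0}(\K;\R^{n})$ via the Poincar\'e (Friedrichs) inequality, uses boundedness of $\Div$ to see that $\{u:\Div u=f\}$ is a nonempty closed convex set, and then applies the Hilbert projection theorem, which is exactly your ``projection view''; your minimizing-sequence and weak lower semicontinuity argument is just the direct-method version of the same existence step.
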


\begin{proof}
Finiteness of $C_{\mathrm{BA}}(\pi)$ makes the feasible set in the
inner infimum nonempty. Equip $H^{1}_{0}(\K;\Rn)$ with the norm $\|\nabla u\|_{2}$.
By the Poincar\'e inequality for $H^{1}_{0}(\K;\Rn)$, this is an
equivalent Hilbert norm on $H^{1}_{0}(\K;\Rn)$. Since $\norm{\Div u}_{2}\leq\sqrt{n}\,\norm{\nabla u}_{2}$,
it follows that $\Div:H^{1}_{0}(\K;\Rn)\to L^{2}_{0}(\pi)$ is a bounded
linear operator (so continuous). Hence, $\Div^{-1}(\{f\})=\{u\in H^{1}_{0}(\K;\R^{n}):\Div u=f\}$
is a nonempty closed convex subset of $H^{1}_{0}(\K;\R^{n})$. By
the Hilbert projection theorem, the infimum in the definition of $C_{\mathrm{BA}}(\pi)$
is attained, and its minimizer satisfies \eqref{eq:div-solver-known}.
\end{proof}

In \S\ref{sec:Spectral_gap_to_BA}, we formally bound the spectral
gaps of HAR and CHAR in terms of the Babu\v{s}ka--Aziz constant
of the target distribution. Notably, the spectral-gap bound for CHAR
is smaller by a factor of the dimension. The rest of the analysis
is independent of the Markov chain and will apply to both HAR and
CHAR. The next step is the following result from \cite{Zsuppan2020}
which relates the Babu\v{s}ka--Aziz constant to the Improved Poincar\'e
constant. We give a self-contained proof of this in \S\ref{sec:BA_to_Poincare}. 
\begin{thm}
\label{thm:BA_to_IPI} For a bounded convex domain in $\Rn$, we have
$C_{\msf{BA}}\le1+4\cipi$.
\end{thm}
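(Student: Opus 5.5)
The plan is to pass to the dual (inf-sup) formulation and split $f$ into a harmonic part and a part that is ``already a divergence of a gradient'' (the Horgan--Payne/Zsupp\'an strategy). By the closed range theorem, finiteness together with a bound $C_{\msf{BA}}\le C$ is equivalent to the inf-sup inequality
\[
\norm p^{2}_{2}\le C\,\sup_{0\neq u\in H^{1}_{0}(\K;\Rn)}\frac{\inner{p,\Div u}^{2}_{\pi}}{\norm{\nabla u}^{2}_{2}}\qquad\text{for all }p\in L^{2}_{0}(\pi)\,.
\]
So I will prove this inequality with $C=1+4\cipi$, writing $S(p)$ for the supremum.

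\textbf{Step 1: orthogonal decomposition.} I will use the standard decomposition $L^{2}_{0}(\pi)=\mc H\oplus\Delta H^{2}_{0}(\K)$, where $\mc H$ is the space of mean-zero harmonic $L^{2}$ functions. The sum is orthogonal because $\int q\,\Delta\varphi=\int\Delta q\,\varphi=0$ for $\varphi\in H^{2}_{0}$. Write $p=q+r$ with $q\in\mc H$ and $r=\Delta\varphi$.

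For the $r$ part, the test field $u=\nabla\varphi\in H^{1}_{0}$ gives $\Div u=r$. Moreover $\norm{\nabla u}_{2}^{2}=\norm{\nabla^{2}\varphi}^{2}_{2}=\norm{\Delta\varphi}^{2}_{2}=\norm r^{2}_{2}$, using two integrations by parts valid on $H^{2}_{0}$. Hence $\inner{p,\Div u}=\norm r^{2}_{2}$, which yields $\norm r^{2}_{2}\le S(p)$. This is where the ``$1$'' comes from.

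\textbf{Step 2: the harmonic part via IPI.} For $q$, I test with $u=\delta^{2}\nabla q$, which vanishes on $\partial\K$. First, $\inner{r,\Div u}=\inner{\Delta\varphi,\Div(\delta^{2}\nabla q)}=0$ should follow by integrating by parts onto $\Delta q=0$; I will check this after mollification. Second, $\inner{q,\Div u}=-\int\delta^{2}\abs{\nabla q}^{2}$. Together these give
\[
\int\delta^{2}\abs{\nabla q}^{2}\le\sqrt{S(p)}\,\norm{\nabla(\delta^{2}\nabla q)}_{2}\,.
\]
Expanding, $\nabla(\delta^{2}\nabla q)=2\delta\nabla\delta\otimes\nabla q+\delta^{2}\nabla^{2}q$, and $\abs{\nabla\delta}\le1$. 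I then need the weighted Caccioppoli estimate $\int\delta^{4}\abs{\nabla^{2}q}^{2}\lesssim\int\delta^{2}\abs{\nabla q}^{2}$. It should follow from $\Delta\abs{\nabla q}^{2}=2\abs{\nabla^{2}q}^{2}$ integrated against $\delta^{4}$, moving derivatives onto $\delta^{4}$. Convexity makes $\delta$ concave, so $\Delta\delta\le0$ a.e. and the term $\delta^{3}\Delta\delta$ has a favorable sign. Careful bookkeeping should give $\norm{\nabla(\delta^{2}\nabla q)}^{2}_{2}\le4\int\delta^{2}\abs{\nabla q}^{2}$, which in turn gives $\int\delta^{2}\abs{\nabla q}^{2}\le4S(p)$. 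Applying \eqref{eq:ipi-definition} to $q$ then yields $\norm q^{2}_{2}\le\cipi\int\delta^{2}\abs{\nabla q}^{2}\le4\cipi\,S(p)$. Summing the two parts and using orthogonality gives $\norm p^{2}_{2}=\norm q^{2}_{2}+\norm r^{2}_{2}\le(1+4\cipi)S(p)$.

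\textbf{Main obstacle.} The delicate step is Step 2. I need the sharp constant $4$ in the weighted Hessian estimate, which will require tracking the cross term $\int\delta^{3}\nabla\delta\cdot\nabla^{2}q\,\nabla q$ exactly rather than via a lossy Young inequality. I also need to justify every integration by parts, since $\delta$ is merely Lipschitz and $q$ may blow up near $\partial\K$. I would first try to prove everything on the interior domains $\{\delta>\veps\}$, which are convex, using $\delta_{\veps}=(\delta-\veps)_{+}$, and mollify the concave $\delta$. I would then pass to the limit by monotone convergence; the weighted quantities are finite because $\delta^{2}\abs{\nabla q}^{2}$ is locally integrable and controlled by $\norm q^{2}_{2}$ via interior estimates for harmonic functions.
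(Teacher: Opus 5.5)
\textbf{There is a genuine gap in Step 2.} The claim $\inner{r,\Div u}=0$ for $r=\Delta\varphi$ and $u=\delta^{2}\nabla q$ is false, and mollification will not rescue it. Harmonicity of $q$ only removes the term $\delta^{2}\Delta q$. What remains is $\Div u=2\delta\,\nabla\delta\cdot\nabla q$. The orthogonal complement of $\Delta H^{2}_{0}$ in $L^{2}_{0}$ is exactly the harmonic functions, so your claim would force $\Div u$ to be harmonic, which it is not in general.

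Concretely, on the unit ball with $q=x_{1}$ you get $\Div u=2x_{1}-2x_{1}/\abs x$, whose Laplacian is $2(n-1)\,x_{1}/\abs x^{3}\neq0$. Take $\varphi\in C^{\infty}_{c}$ supported away from the origin with $\int\varphi\,x_{1}\abs x^{-3}\neq0$, and set $p=x_{1}+\Delta\varphi$. Then the cross term $\inner{r,\Div u}$ is nonzero, can have either sign, and can cancel $-\int\delta^{2}\abs{\nabla q}^{2}$. So testing $S(p)$ with $\delta^{2}\nabla q$ gives no control of $\int\delta^{2}\abs{\nabla q}^{2}$, and $\norm q^{2}_{2}\le4\cipi S(p)$ is left unproved.

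\textbf{The missing idea is to test $q$ alone and compare $S(q)$ with $S(p)$.} By Lemma~\ref{lem:Riesz}, $S(\cdot)=\norm{\Div^{*}\cdot}^{2}_{V}$. Your Step~1 computation shows $\Div^{*}\Delta\varphi=\nabla\varphi$. Hence $\inner{\Div^{*}q,\Div^{*}r}_{V}=\inner{q,\Delta\varphi}=0$, and so $S(p)=S(q)+\norm r^{2}_{2}\ge S(q)$. Testing $S(q)$ with $u=\delta^{2}\nabla q$ then gives $\int\delta^{2}\abs{\nabla q}^{2}\le4S(q)$.

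The estimate $\norm{\nabla u}^{2}_{2}\le4\int\delta^{2}\abs{\nabla q}^{2}$, which you flagged as the main obstacle, is immediate from Lemma~\ref{lem:dual_matrix}. There $\Div u=\nabla(\delta^{2})\cdot\nabla q$ and $G_{ij}=\partial_{j}(\delta^{2})\,\partial_{i}q-\partial_{i}(\delta^{2})\,\partial_{j}q$. Lagrange's identity then gives $\norm{\nabla u}^{2}_{2}=\int4\delta^{2}\abs{\nabla\delta}^{2}\abs{\nabla q}^{2}$. No Caccioppoli estimate, convexity, or sign of $\Delta\delta$ is needed; in your bookkeeping the $\delta^{3}\Delta\delta$ terms cancel exactly rather than having a favorable sign.

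What does need care is $u\in H^{1}_{0}$. This follows from $\delta\nabla q,\ \delta^{2}\nabla^{2}q\in L^{2}$ (interior estimates on Whitney cubes) together with the cutoff $\eta(\delta/\veps)$.

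With this repair, $\norm p^{2}_{2}\le4\cipi S(q)+\norm r^{2}_{2}\le\max\{1,4\cipi\}\,S(p)$, which is at least as good as the claimed $1+4\cipi$. The paper avoids the harmonic decomposition altogether. It takes the Riesz representer $u$ of $f$ and notes that $h=f-\Div u$ satisfies $\nabla h=\Div G$ with $G$ antisymmetric. It then bounds $\norm{\delta\nabla h}_{2}\le2\norm G$ by one integration by parts.
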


Then, in \S\ref{sec:bounding_IPI}, we show that $\cipi(\pi)\lesssim n^{2}\cpi(\pi)$
for a convex body containing a unit ball (Theorem~\ref{thm:l1-ipi}
and Corollary~\ref{cor:ipi-pi}). These ingredients suffice to prove
the main results. 

\subsection{Proof of main result}
\begin{proof}[Proof of Theorem~\ref{thm:main}]
 Note that $\cpi(\pi)\ge\norm{\cov\pi}_{\op}\gtrsim1/n$ when $B(0,1)\subset\K$
(see \cite[Lemma~2.1]{KV26unified}). By applying Propositions~\ref{prop:HR-comparison}
and~\ref{prop:CHAR-comparison} with Theorem~\ref{thm:BA_to_IPI}
($C_{\msf{BA}}\leq1+4\cipi$) and Corollary~\ref{cor:ipi-pi} ($\cipi\lesssim n^{2}\cpi/r^{2}$
with $r=1$),
\[
\lda_{\HR}\gtrsim\frac{1}{n^{2}\cpi(\pi)}\,,\qquad\lda_{\CHAR}\gtrsim\frac{1}{n^{3}\cpi(\pi)}\,.
\]
Using $\cpi(\pi)\lesssim\norm{\cov\pi}_{\op}\log n$ \cite{Klartag23log},
we complete the proof.
\end{proof}

\begin{proof}
[Proof of Corollary~\ref{cor:mixing-har-char}] This follows from
the spectral-gap results in Theorem~\ref{thm:main} and \eqref{eq:spectral-gap-to-chi2}.
\end{proof}

\section{Spectral gap via Babu\v{s}ka--Aziz constant \label{sec:Spectral_gap_to_BA}}

\subsection{Hit-and-Run}

Let $\sigma$ be uniform probability measure on $\Sph^{n-1}$. For
$\theta\in\Sph^{n-1}$, let $P_{\theta}$ be conditional expectation
with respect to the orthogonal projection onto $\theta^{\perp}$,
so $P_{\theta}f$ is the uniform average of $f$ on each $\theta$-parallel
chord. Then,
\begin{align*}
P_{\HR} & =\int_{\Sph^{n-1}}P_{\theta}\,\D\sigma(\theta)\,,\\
\mc E_{\HR}(f,f) & =\inner{f,(\Id-P_{\HR})f}_{\pi}=\int_{\Sph^{n-1}}\norm{f-P_{\theta}f}^{2}_{2}\,\D\sigma(\theta)\,.
\end{align*}
For $\theta$ drawn from $\Sph^{n-1}$, the natural dual certificate
is $g_{\theta}(x):=n\,\theta^{\T}\nabla u(x)\theta$.
\begin{prop}
\label{prop:HR-comparison} If $C_{\msf{BA}}(\pi)<\infty$, then 
\[
\lda_{\HR}\geq\frac{n+2}{n\,(2+C_{\msf{BA}}(\pi))}\gtrsim\frac{1}{1+C_{\msf{BA}}(\pi)}\,.
\]
\end{prop}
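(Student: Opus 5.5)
The plan is to run the dual-certificate argument of \S\ref{subsec:Digesting-proof-ideas}, now with a continuum of directions. Take $F:=L^{2}_{0}(\pi)$ and let $G$ be the space of functions $g(x,\theta)=g_{\theta}(x)$ with $\norm g^{2}_{G}:=\int_{\Sph^{n-1}}\norm{g_{\theta}}^{2}_{2}\,\D\sigma(\theta)$. Define $(Tf)_{\theta}:=(\Id-P_{\theta})f$, so that $\norm{Tf}^{2}_{G}=\mc E_{\HR}(f,f)$. Since each $P_{\theta}$ is self-adjoint, the adjoint is
\[
T^{*}g=\int_{\Sph^{n-1}}(\Id-P_{\theta})g_{\theta}\,\D\sigma(\theta)\,.
\]
Given $f\in L^{2}_{0}(\pi)$, Lemma~\ref{lem:known-divergence} supplies $u\in H^{1}_{0}(\K;\R^{n})$ with $\Div u=f$ and $\norm{\nabla u}^{2}_{2}\le C_{\msf{BA}}(\pi)\norm f^{2}_{2}$. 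The certificate is $g_{\theta}:=n\,\theta^{\T}\nabla u\,\theta$.

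\textbf{Step 1 ($T^{*}g=f$).} Note that $\theta^{\T}\nabla u\,\theta=\partial_{\theta}(\theta\cdot u)$ is a directional derivative along $\theta$ of a function vanishing on $\partial\K$. Exactly as in the CHAR computation, the fundamental theorem of calculus on each $\theta$-chord gives $P_{\theta}g_{\theta}=0$. Next, $\int n\,\theta\theta^{\T}\,\D\sigma=I_{n}$, so $\int g_{\theta}\,\D\sigma=\tr\nabla u=\Div u=f$. Hence $T^{*}g=f$.

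\textbf{Step 2 (norm of the certificate).} Fix $x$, and let $A=\nabla u(x)$ with symmetric part $S$. The Gaussian/spherical fourth-moment identity gives
\[
\E_{\sigma}(\theta^{\T}S\theta)^{2}=\frac{(\tr S)^{2}+2\norm S^{2}_{\frob}}{n(n+2)}\,.
\]
Therefore
\[
\norm g^{2}_{G}=\frac{n}{n+2}\int\bpar{(\Div u)^{2}+2\norm{S}^{2}_{\frob}}\,\D\pi\,.
\]
Now $2\norm S^{2}_{\frob}=\norm A^{2}_{\frob}+\tr(A^{2})$. For $u\in H^{1}_{0}$, integrating by parts twice (first for $C^{\infty}_{c}$ fields, then by density) gives $\int\tr\bpar{(\nabla u)^{2}}\,\D\pi=\int(\Div u)^{2}\,\D\pi$. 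Combining,
\[
\norm g^{2}_{G}=\frac{n}{n+2}\bpar{2\norm f^{2}_{2}+\norm{\nabla u}^{2}_{2}}\le\frac{n\,(2+C_{\msf{BA}})}{n+2}\norm f^{2}_{2}\,.
\]

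\textbf{Step 3 (conclude).} The duality chain from \S\ref{subsec:Digesting-proof-ideas} then gives $\norm f^{2}_{2}\le\frac{n(2+C_{\msf{BA}})}{n+2}\,\mc E_{\HR}(f,f)$, which is the stated bound. The final $\gtrsim$ follows because $\frac{n+2}{n}\ge1$ and $2+C_{\msf{BA}}\le2(1+C_{\msf{BA}})$.

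\textbf{Main obstacle.} The delicate step is the rigorous justification of $P_{\theta}g_{\theta}=0$ for a merely $H^{1}_{0}$ field. This needs absolute continuity of $\theta\cdot u$ on almost every chord with zero boundary values, via Fubini and approximation by $C^{\infty}_{c}$ fields. I would argue it first for $u\in C^{\infty}_{c}$ and pass to the limit, using that $f\mapsto P_{\theta}f$ and $u\mapsto g_{\theta}$ are $L^{2}$-continuous, uniformly in $\theta$.
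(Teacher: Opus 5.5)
Your proposal is correct and follows essentially the same route as the paper. It uses the same certificate $g_\theta=n\,\theta^{\T}\nabla u\,\theta$, the same approximation argument for $P_\theta g_\theta=0$, the spherical fourth-moment identity, and the integration-by-parts identity $\int\tr((\nabla u)^2)\,\D\pi=\norm{\Div u}_2^2$ (equivalent to the paper's $2\norm{\sym\nabla u}_2^2=\norm{\nabla u}_2^2+\norm{\Div u}_2^2$); the only cosmetic difference is that you phrase the Cauchy--Schwarz step through an abstract adjoint $T^{*}$, whereas the paper writes $\norm f_2^2=\int\inner{f-P_\theta f,g_\theta}_\pi\,\D\sigma(\theta)$ directly.
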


\begin{proof}
Fix $f\in L^{2}_{0}(\pi)$, and pick $u\in H^{1}_{0}(\K;\Rn)$ given
by Lemma~\ref{lem:known-divergence} that satisfies $\Div u=f$ and
$\norm{\nabla u}^{2}_{2}\leq C_{\msf{BA}}(\pi)\,\norm f^{2}_{2}$.
For $\theta\in\Sph^{n-1}$, define 
\[
g_{\theta}(x):=n\,\theta^{\T}\nabla u(x)\,\theta\qquad\text{for a.e. }x\in\K\,.
\]
 Let $u\in H^{1}_{0}(\K;\R^{n})$ and choose $u_{k}\in C^{\infty}_{c}(\K;\R^{n})$
converging to $u$ in $H^{1}$. Define $g_{\theta,k}:=n\,\theta^{\T}\nabla u_{k}\theta$.
The fundamental theorem of calculus on every $\theta$-parallel line
gives $P_{\theta}g_{\theta,k}=0$, because $u_{k}$ is compactly supported.
Since $g_{\theta,k}\to g_{\theta}$ in $L^{2}(\pi)$ and conditional
expectation is an $L^{2}$ contraction, $P_{\theta}g_{\theta}=\lim_{k\to\infty}P_{\theta}g_{\theta,k}=0$
in $L^{2}$. Hence, $0=\inner{f,P_{\theta}g_{\theta}}_{\pi}=\inner{P_{\theta}f,g_{\theta}}_{\pi}$.

Recall that for a square matrix $M$, rotational invariance of $\D\sigma(\theta)$
gives 
\begin{align}
\E_{\theta}[\theta\theta^{\T}] & =\frac{1}{n}\,\Id\,,\label{eq:HR-second-moment}\\
\E_{\theta}[(\theta^{\T}M\theta)^{2}] & =\frac{(\tr M)^{2}+2\,\norm{\sym M}^{2}_{\frob}}{n\,(n+2)}\,,\label{eq:HR-fourth-moment}
\end{align}
where the second identity follows from $\E[\theta_{i}\theta_{j}\theta_{k}\theta_{\ell}]=\frac{\delta_{ij}\delta_{k\ell}+\delta_{ik}\delta_{j\ell}+\delta_{i\ell}\delta_{jk}}{n\,(n+2)}$.
By \eqref{eq:HR-second-moment} and $\Div u=f$, 
\[
\int_{\Sph^{n-1}}g_{\theta}\,\D\sigma(\theta)=\tr\nabla u=f\,.
\]
Therefore, using $\inner{P_{\theta}f,g_{\theta}}_{\pi}=0$ and Cauchy--Schwarz,
\begin{align}
\norm f^{2}_{2} & =\int_{\Sph^{n-1}}\inner{f,g_{\theta}}_{\pi}\,\D\sigma(\theta)=\int_{\Sph^{n-1}}\inner{f-P_{\theta}f,g_{\theta}}_{\pi}\,\D\sigma(\theta)\,,\nonumber \\
\norm f^{4}_{2} & \leq\mc E_{\HR}(f,f)\,\int_{\Sph^{n-1}}\norm{g_{\theta}}^{2}_{2}\,\D\sigma(\theta)\,.\label{eq:duality-CS-HR}
\end{align}
For a square matrix $M$, we denote its symmetrization by $\sym M:=(M+M^{\T})/2$.
By integration by parts, for $u\in H^{1}_{0}(\K;\R^{n})$, 
\begin{equation}
2\,\norm{\sym\nabla u}^{2}_{2}=\norm{\nabla u}^{2}_{2}+\norm{\Div u}^{2}_{2}\,,\label{eq:prelim-sym-identity}
\end{equation}
since the cross term satisfies $\sum_{i,j}\int_{\K}(\partial_{j}u_{i})(\partial_{i}u_{j})\,\D\pi=\int_{\K}(\Div u)^{2}\,\D\pi$.
Applying \eqref{eq:HR-fourth-moment} to $M=\nabla u$, and then using
\eqref{eq:prelim-sym-identity}, $\Div u=f$, and $\norm{\nabla u}^{2}_{2}\leq C_{\msf{BA}}(\pi)\,\norm f^{2}_{2}$,
we obtain
\begin{align*}
\int_{\Sph^{n-1}}\norm{g_{\theta}}^{2}_{2}\,\D\sigma(\theta) & =\frac{n}{n+2}\,\E_{\pi}[(\Div u)^{2}+2\,\norm{\sym\nabla u}^{2}_{\frob}]=\frac{n}{n+2}\,(\norm f^{2}_{2}+2\,\norm{\sym\nabla u}^{2}_{2})\\
 & \leq\frac{n\,\bpar{2+C_{\msf{BA}}(\pi)}}{n+2}\,\norm f^{2}_{2}\,.
\end{align*}
Substitution into \eqref{eq:duality-CS-HR} gives 
\[
\mc E_{\HR}(f,f)\geq\frac{n+2}{n\,\bpar{2+C_{\msf{BA}}(\pi)}}\,\norm f^{2}_{2}\,.
\]
Taking the infimum over nonzero $f\in L^{2}_{0}(\pi)$ completes the
proof.
\end{proof}

\subsection{Coordinate Hit-and-Run}
\begin{prop}
\label{prop:CHAR-comparison} If $C_{\msf{BA}}(\pi)<\infty$, then
\[
\lda_{\CHAR}\geq\frac{1}{nC_{\msf{BA}}(\pi)}\,.
\]
\end{prop}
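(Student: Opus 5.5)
We follow the dual-certificate scheme sketched in \S\ref{subsec:Digesting-proof-ideas}, now made rigorous in the same way as in the proof of Proposition~\ref{prop:HR-comparison}. Fix $f\in L^{2}_{0}(\pi)$ and use Lemma~\ref{lem:known-divergence} to pick $u\in H^{1}_{0}(\K;\R^{n})$ with $\Div u=f$ and $\norm{\nabla u}^{2}_{2}\le C_{\msf{BA}}(\pi)\,\norm f^{2}_{2}$. The certificates are $g_{i}:=n\,\partial_{i}u_{i}$ for $i\in[n]$. By construction, $\frac{1}{n}\sum_{i}g_{i}=\Div u=f$.

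The first step is to show $P_{i}g_{i}=0$ for each $i$. We approximate $u$ in $H^{1}$ by $u_{k}\in C^{\infty}_{c}(\K;\R^{n})$. For these, the fundamental theorem of calculus along each $e_{i}$-parallel chord gives $P_{i}(n\,\partial_{i}u_{k,i})=0$, because $u_{k,i}$ vanishes near both endpoints of the chord. We then pass to the limit using $\partial_{i}u_{k,i}\to\partial_{i}u_{i}$ in $L^{2}(\pi)$ and the fact that $P_{i}$ is an $L^{2}$ contraction. Self-adjointness of $P_{i}$ then yields $\inner{P_{i}f,g_{i}}_{\pi}=\inner{f,P_{i}g_{i}}_{\pi}=0$.

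Next we run the duality and Cauchy--Schwarz argument:
\[
\norm f^{2}_{2}=\frac{1}{n}\sum_{i}\inner{f,g_{i}}_{\pi}=\frac{1}{n}\sum_{i}\inner{f-P_{i}f,g_{i}}_{\pi}\le\Bpar{\frac{1}{n}\sum_{i}\norm{f-P_{i}f}^{2}_{2}}^{1/2}\Bpar{\frac{1}{n}\sum_{i}\norm{g_{i}}^{2}_{2}}^{1/2}.
\]
The first factor is exactly $\mc E_{\CHAR}(f,f)^{1/2}$. For the second factor,
\[
\frac{1}{n}\sum_{i}\norm{g_{i}}^{2}_{2}=n\sum_{i}\norm{\partial_{i}u_{i}}^{2}_{2}\le n\,\norm{\nabla u}^{2}_{2}\le n\,C_{\msf{BA}}(\pi)\,\norm f^{2}_{2},
\]
since only the diagonal entries of $\nabla u$ appear. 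Squaring and dividing gives $\mc E_{\CHAR}(f,f)\ge\norm f^{2}_{2}/(n\,C_{\msf{BA}}(\pi))$. Taking the infimum over nonzero $f\in L^{2}_{0}(\pi)$ proves the claim.

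The only delicate point is the rigorous justification of $P_{i}g_{i}=0$ for Sobolev (rather than smooth) $u$, since chords of $\K$ need not behave uniformly near the boundary. The approximation argument above handles this cleanly, exactly as in the Hit-and-Run case. Everything else is bookkeeping.
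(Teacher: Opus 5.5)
Your proposal is correct and matches the paper's proof essentially step for step. It uses the same certificates $g_i = n\,\partial_i u_i$, the same approximation argument from the Hit-and-Run case to get $P_i g_i = 0$, and the same duality and Cauchy--Schwarz step closed by $\sum_i \norm{\partial_i u_i}_2^2 \le \norm{\nabla u}_2^2 \le C_{\msf{BA}}(\pi)\,\norm{f}_2^2$.
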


\begin{proof}
Fix $f\in L^{2}_{0}(\pi)$ and let $u$ be given by Lemma~\ref{lem:known-divergence}.
For $i=1,\ldots,n$, set $g_{i}:=n\,\de_{i}u_{i}$ and note that $\frac{1}{n}\sum^{n}_{i=1}g_{i}=\Div u=f$.
A similar argument in the proof of Proposition~\ref{prop:HR-comparison}
gives $P_{i}g_{i}=0$, and thus $0=\inner{f,P_{i}g_{i}}_{\pi}=\inner{P_{i}f,g_{i}}_{\pi}$.
Then, 
\begin{align}
\norm f^{2}_{2} & =\frac{1}{n}\sum^{n}_{i=1}\inner{f,g_{i}}_{\pi}=\frac{1}{n}\sum^{n}_{i=1}\inner{f-P_{i}f,g_{i}}_{\pi}\,,\nonumber \\
\norm f^{4}_{2} & \leq\mc E_{\CHAR}(f,f)\,\frac{1}{n}\sum^{n}_{i=1}\norm{g_{i}}^{2}_{2}=n\,\mc E_{\CHAR}(f,f)\,\sum^{n}_{i=1}\norm{\partial_{i}u_{i}}^{2}_{2}\,.\label{eq:duality-CS-CHAR}
\end{align}
By Lemma~\ref{lem:known-divergence}, 
\[
\sum^{n}_{i=1}\norm{\partial_{i}u_{i}}^{2}_{2}\leq\norm{\nabla u}^{2}_{2}\leq C_{\msf{BA}}(\pi)\,\norm f^{2}_{2}\,.
\]
Substitution into \eqref{eq:duality-CS-CHAR} gives $\mc E_{\CHAR}(f,f)\geq\frac{1}{nC_{\msf{BA}}(\pi)}\,\norm f^{2}_{2}$,
and taking the infimum over nonzero $f\in L^{2}_{0}(\pi)$ finishes
the proof.
\end{proof}

\section{Bounding the improved Poincar\'e constant \label{sec:bounding_IPI}}

Since $C_{\msf{BA}}\le1+4\cipi$ for a bounded convex domain (Theorem~\ref{thm:BA_to_IPI}),
we are only left with bounding the improved Poincar\'e constant \eqref{eq:ipi-definition}
in terms of the usual Poincar\'e constant. Before we proceed, we
briefly survey prior work on \eqref{eq:ipi-definition}. Improved
Poincar\'e inequalities with the distance to the boundary as a weight
have been studied for John domains (which include convex domains),
starting with \cite{HurriSyrjanen1994} and further developed in \cite{DD08improved,CW10poincare}.
In particular, for convex domains, \cite{CW10poincare} gives a quantitative
bound in terms of the \emph{eccentricity} of the domain. In our notation,
this implies $\cipi(\pi)\le C(n)\,R^{2}/r^{2}$, where $R$ and $r$
denote the outer-radius and in-radius of the domain, respectively,
and the dimension dependence is absorbed into $C(n)$. For our application,
we need a bound in terms of the ordinary Poincar\'e constant with
explicit dimension dependence, so in Corollary~\ref{cor:ipi-pi}
we show $\cipi(\pi)\lesssim n^{2}\cpi(\pi)/r^{2}$.
\begin{thm}
\label{thm:l1-ipi} For any locally Lipschitz function $f:\Rn\to\R$
and a convex body $\K\subset\Rn$ containing a ball of radius $r$,
we have
\[
\inf_{z}\int_{\K}\abs{f(x)-z}\,\D x\lesssim\frac{n\sqrt{\cpi(\pi)}}{r}\int_{\K}\delta(x)\,\abs{\nabla f}\,\D x\,.
\]
\end{thm}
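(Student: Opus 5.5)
We prove the weighted $L^{1}$ inequality by passing through the coarea formula and a weighted isoperimetric inequality. We also use that the $L^{1}$ Poincar\'e (Cheeger) constant of a convex body is comparable to $\sqrt{\cpi}$.

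\textbf{Step 1: reduction to sets.} By the coarea formula, $\int_{\K}\delta\abs{\nabla f}\,\D x=\int_{\R}\int_{\{f=t\}\cap\K}\delta\,\D\mc H^{n-1}\,\D t$. Take $z$ to be a median of $f$ under $\pi$. Then $\int_{\K}\abs{f-z}\,\D x=\int_{\R}\vol(\{f>t\}\cap\K)\wedge\vol(\{f\le t\}\cap\K)\,\D t$. It therefore suffices to prove a weighted isoperimetric inequality: for every measurable $S\subseteq\K$ with $\vol S\le\vol\K/2$,
\[
\vol S\lesssim\frac{n\sqrt{\cpi(\pi)}}{r}\int_{\partial S\cap\K}\delta\,\D\mc H^{n-1}\,.
\]
We need this for sets $S$ of finite perimeter, i.e.\ superlevel sets of locally Lipschitz $f$. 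The median choice handles both halves of each level set.

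\textbf{Step 2: peel off a boundary layer.} Let $\K_{s}:=\{x\in\K:\delta(x)>s\}$. Since $\K\supseteq B(x_{0},r)$, the homothety $x\mapsto x_{0}+(1-s/r)(x-x_{0})$ maps $\K$ into $\K_{s}$. Hence $\vol(\K\setminus\K_{s})\le(1-(1-s/r)^{n})\vol\K\le ns\vol\K/r$. Choose $s=cr/n$ with $c$ small, so that the layer $\K\setminus\K_{s}$ has at most a $1/4$ fraction of the volume of $\K$. Also, $\K_{s}$ contains the shrunk copy $(1-s/r)$-homothet of $\K$. By convexity and the Poincar\'e constant's scaling, $\K_{s}$ has Cheeger constant comparable to that of $\K$, namely $\gtrsim1/\sqrt{\cpi(\pi)}$, using KLS-type equivalence of Cheeger and Poincar\'e for convex bodies.

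\textbf{Step 3: case split.}
\begin{itemize}
\item Case (a): $\vol(S\cap\K_{s})\ge\vol S/2$. Apply Cheeger on the convex body $\K_{s}$ to $S\cap\K_{s}$, whose relative volume in $\K_{s}$ is at most roughly $2/3$. This gives $\mc H^{n-1}(\partial S\cap\K_{s})\gtrsim\vol S/\sqrt{\cpi}$. On $\K_{s}$ we have $\delta\ge s=cr/n$, so $\int_{\partial S\cap\K}\delta\gtrsim(r/n)\vol S/\sqrt{\cpi}$, which is the claim.
\item Case (b): most of $S$ lies in the thin layer. Use a radial (cone) argument from $x_{0}$. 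Along each ray from $x_{0}$, the distance $\delta$ grows at least linearly from the boundary: on the segment toward $x_{0}$, $\delta(y)\ge(r/\abs{x-x_{0}})\cdot$(distance travelled). Run a one-dimensional weighted argument on each ray, with density $\propto\rho^{n-1}$ in polar coordinates. Since $S$ has at most half the volume, either $S$ must exit along many rays, or $S\cap\K_{s}$ is large and we are back in case (a). An exit at layer depth $t$ costs weight $\delta\approx t$, while the mass of $S$ up to depth $t$ on that ray is about $t\cdot\rho^{n-1}$ times a geometric factor $\abs{x-x_{0}}/r$. This factor is bounded by $\diam$-type quantities, so it must be controlled. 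We control it via $\cpi\gtrsim\norm{\cov}_{\op}$, or more carefully by integrating over rays with the cone measure.
\end{itemize}

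\textbf{Main obstacle.} Case (b) is the delicate step. The crude ray argument yields a factor $R/r$ instead of $n\sqrt{\cpi}/r$. The first remedy to try is to avoid rays and instead apply case (a)'s Cheeger bound at every depth $s'\in(0,s)$, averaging over the coarea in $\delta$. The sets $\K_{s'}$ are all convex with Cheeger constant $\gtrsim1/\sqrt{\cpi}$, uniformly for $s'\le cr/n$. Layer-cake in $s'$ then gives
\[
\int_{\partial S}\delta\ge\int_{0}^{s}\mc H^{n-1}(\partial S\cap\K_{s'})\,\D s'\gtrsim\int_{0}^{s}\frac{\vol(S\cap\K_{s'})\wedge\vol(\K_{s'}\setminus S)}{\sqrt{\cpi}}\,\D s'\,.
\]
The remaining worry is the mass of $S$ located within depth $s'$ of $\partial\K$. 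We plan to handle it using the fact that $\partial\K_{s'}$ itself contributes relative boundary of $S\cap\K_{s'}$. That boundary is not counted in $\partial S$, so we must compensate by the volume bound $\vol(\K\setminus\K_{s'})\le ns'\vol\K/r$. The crux is to balance these two effects by choosing the constant $c$.
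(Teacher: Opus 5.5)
\textbf{There is a genuine gap: Case~(b) is never closed, and the obstacle you name there is not a real one.}

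You say the radial argument gives $R/r$ ``instead of'' $n\sqrt{\cpi(\pi)}/r$. But for every convex body $R\le\diam\K\lesssim n\,\|\cov\pi\|_{\mathrm{op}}^{1/2}\le n\sqrt{\cpi(\pi)}$. The middle inequality holds because a convex body lies in the $\sqrt{n(n+2)}$-dilate of its inertia ellipsoid about its centroid \cite[Theorem~4.1]{KLS95isop}. You do mention $\cpi\gtrsim\|\cov\pi\|_{\mathrm{op}}$, but without this diameter bound it says nothing about $R$.

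You then drop the ray argument for a layer-cake over Cheeger inequalities on the $\K_{s'}$, and this cannot work. Take $S=\K\setminus\K_{\epsilon}$ with $\epsilon\ll r/n$. Then $S\cap\K_{s'}=\emptyset$ for $s'\ge\epsilon$, so your right-hand side is at most $\epsilon\vol S/\sqrt{\cpi}$. You only get $\vol S\lesssim(\sqrt{\cpi}/\epsilon)\int_{\partial S}\delta$, which degenerates as $\epsilon\to0$. Cheeger inequalities on inner parallel bodies cannot see sets hugging $\partial\K$. The bound $\vol(\K\setminus\K_{s'})\le ns'\vol\K/r$ cannot compensate, since $S$ may occupy any fraction of that layer. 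Such sets need a one-dimensional Hardy-type estimate along rays from the center of the inscribed ball, using $\delta(y)\ge r(1-\|y\|_{\K})$. That is exactly where $R/r$ appears, and by the diameter bound it is already good enough.

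\textbf{How the paper closes this.} The paper works with functions, which also removes your case split. With $B(0,r)\subset\K$ and $\alpha=1-\frac{1}{2n}$, it writes
$\inf_{z}\int_{\K}|f-z|\,\D x\le\int_{\K}|f(x)-f(\alpha x)|\,\D x+\inf_{z}\int_{\K}|f(\alpha x)-z|\,\D x$.

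For the first term, $|f(x)-f(\alpha x)|\le\int_{\alpha}^{1}|x|\,|\nabla f(tx)|\,\D t$ with $|x|\lesssim n\sqrt{\cpi}$. Substituting $y=tx$ and applying Fubini produces the weight $\int_{\max\{\alpha,\|y\|_{\K}\}}^{1}t^{-n}\,\D t\le2(1-\|y\|_{\K})\le2\delta(y)/r$.

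The second term equals $\alpha^{-n}\inf_{z}\int_{\alpha\K}|f-z|\,\D x$. It is handled by the $L^{1}$ Poincar\'e inequality on the exact homothet $\alpha\K$ (constant $\lesssim\alpha\sqrt{\cpi}$), together with $\delta\ge r/(2n)$ on $\alpha\K$.

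So your Case~(a) corresponds to the second term, and your Case~(b) should be the first. Compare $S$ with its image under the homothety $x\mapsto x_{0}+\alpha(x-x_{0})$, rather than arguing ray by ray.

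\textbf{A related issue.} Your claim that each $\K_{s'}$ has Cheeger constant $\gtrsim1/\sqrt{\cpi(\pi)}$ does not follow from scaling, since $\K_{s'}$ is not a homothet of $\K$. It would need a stability theorem for Cheeger constants of logconcave measures under bounded density ratio (E.~Milman). Using the homothet $x_{0}+(1-s/r)(\K-x_{0})\subseteq\K_{s}$, as the paper does with $\alpha\K$, avoids this entirely.
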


This gives the following immediate corollary:
\begin{cor}
\label{cor:ipi-pi} For a convex body $\K$ in $\Rn$ whose maximum
inscribed ball has radius $r$, we have
\[
\cipi(\pi)\lesssim\frac{n^{2}\cpi(\pi)}{r^{2}}\,.
\]
\end{cor}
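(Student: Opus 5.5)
The plan is to derive the $L^{2}$ inequality \eqref{eq:ipi-definition} from the $L^{1}$ inequality of Theorem~\ref{thm:l1-ipi} by the standard ``square trick'': apply the $L^{1}$ bound to the squares of the positive and negative parts of $f$, centered at a median. Write $A:=n\sqrt{\cpi(\pi)}/r$. Dividing both sides of Theorem~\ref{thm:l1-ipi} by $\vol\K$, it reads $\inf_{z}\int_{\K}\abs{g-z}\,\D\pi\lesssim A\int_{\K}\delta\,\abs{\nabla g}\,\D\pi$ for every locally Lipschitz $g$.

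\textbf{Reduction to smooth $f$.} It suffices to treat $f$ smooth on a neighborhood of $\overline{\K}$. Since $\K$ is convex, hence a Lipschitz domain, such functions (restricted to $\K$) are dense in $H^{1}(\K)$. Both $\var_{\pi}f$ and $\int\delta^{2}\abs{\nabla f}^{2}\,\D\pi$ are continuous under $H^{1}$ convergence, because $\delta\le\diam\K$ is bounded. So the inequality passes to the limit.

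\textbf{Main step.} Fix such an $f$ and let $m$ be a median of $f$ under $\pi$. Set $f_{+}:=(f-m)_{+}$ and $g:=f_{+}^{2}$. Then $g$ is locally Lipschitz with $\nabla g=2f_{+}\nabla f\,\ind_{\{f>m\}}$ a.e. Because $g\ge0$ and $g=0$ on $\{f\le m\}$, which has $\pi$-measure at least $1/2$, the value $0$ is a median of $g$. The median minimizes $z\mapsto\int\abs{g-z}\,\D\pi$, so
\[
\inf_{z}\int_{\K}\abs{g-z}\,\D\pi=\int_{\K}f_{+}^{2}\,\D\pi\,.
\]
Theorem~\ref{thm:l1-ipi} and Cauchy--Schwarz then give
\[
\int_{\K}f_{+}^{2}\,\D\pi\lesssim 2A\int_{\K}\delta\,f_{+}\abs{\nabla f}\,\D\pi\le 2A\Bpar{\int_{\K}f_{+}^{2}\,\D\pi}^{1/2}\Bpar{\int_{\{f>m\}}\delta^{2}\abs{\nabla f}^{2}\,\D\pi}^{1/2}.
\]
Hence $\int f_{+}^{2}\,\D\pi\lesssim A^{2}\int_{\{f>m\}}\delta^{2}\abs{\nabla f}^{2}\,\D\pi$. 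The same argument applied to $f_{-}:=(m-f)_{+}$ gives the analogous bound with the domain of integration $\{f<m\}$.

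\textbf{Conclusion.} On $\{f=m\}$ we have $\nabla f=0$ a.e., so adding the two bounds yields
\[
\var_{\pi}f\le\int_{\K}(f-m)^{2}\,\D\pi\lesssim A^{2}\int_{\K}\delta^{2}\abs{\nabla f}^{2}\,\D\pi=\frac{n^{2}\cpi(\pi)}{r^{2}}\int_{\K}\delta^{2}\abs{\nabla f}^{2}\,\D\pi\,,
\]
which is the claim.

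The only delicate point is the approximation step: we must justify that Theorem~\ref{thm:l1-ipi}, stated for locally Lipschitz functions on $\Rn$, controls every $f\in H^{1}(\K)$. I would handle this by the density of $C^{\infty}(\overline{\K})$ in $H^{1}(\K)$ together with the boundedness of the weight $\delta$. Everything else is the routine median and Cauchy--Schwarz argument above.
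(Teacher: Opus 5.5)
Your proposal is correct and follows the paper's proof essentially step for step. You reduce to smooth $f$ by density, apply Theorem~\ref{thm:l1-ipi} to $g=((f-m)_{+})^{2}$ (which has median $0$), close with Cauchy--Schwarz, repeat for $(m-f)_{+}$, and add the two bounds. Your extra remarks make explicit details the paper leaves implicit: that the median minimizes the $L^{1}$ deviation, that $\nabla f=0$ a.e.\ on $\{f=m\}$, and that $\delta$ is bounded, which lets the inequality pass to the $H^{1}$ limit.
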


\begin{proof}
Since $C^{\infty}(\Rn)|_{\K}$ is dense in $H^{1}(\K)$, it suffices
to prove the claim for $f\in C^{\infty}(\mathbb{R}^{n})|_{\K}$. Then,
Theorem~\ref{thm:l1-ipi} holds with $z$ set to the median of $f$,
i.e., for any $g:\Rn\rightarrow\R$ with median $0$, we have
\[
\int_{\K}\abs{g(x)}\,\D x\lesssim\frac{n\sqrt{\cpi(\pi)}}{r}\int_{\K}\delta(x)\,\abs{\nabla g}\,\D x\,.
\]
Now set $u(x):=(f(x)-z_{f})^{+}$ where $z_{f}$ is the median of
$f$ and apply the above to $g=u^{2}$. Then, $0$ is the median of
$g$. Denoting $C:=n\cpi(\pi)^{1/2}/r$, we have
\[
\int_{\K}u^{2}\,\D x\lesssim C\int_{\K}\delta(x)\,u\,\abs{\nabla u}\,\D x\leq C\,\Bpar{\int_{\K}u^{2}\,\D x}^{1/2}\Bpar{\int_{\K}\delta(x)^{2}\,\abs{\nabla u}^{2}\,\D x}^{1/2}\,.
\]
 Hence, $\int_{\K}u^{2}\,\D x\lesssim C^{2}\int_{\K}\delta(x)^{2}\,\abs{\nabla u}^{2}\,\D x$.

We get the same conclusion with $v=(z_{f}-f(x))^{+}$ in place of
$u$. Adding these,
\[
\var_{\pi}f\leq\int_{\K}(f-z_{f})^{2}\,\D\pi\lesssim\frac{n^{2}\cpi(\pi)}{r^{2}}\int_{\K}\delta(x)^{2}\,\abs{\nabla f}^{2}\,\D\pi\,,
\]
which completes the proof.
\end{proof}

We now give an elementary proof of the theorem.
\begin{proof}
[Proof of Theorem~\ref{thm:l1-ipi}] We may assume that $B(0,r)\subset\K$
by translation. The idea is just that since $\K$ contains a ball
of radius $r$, we will consider a slight contraction of $\K$ by
roughly $(1-1/n)$. Set $\alpha=1-\frac{1}{2n}$, and note that 
\[
\inf_{z}\int_{\K}\abs{f(x)-z}\,\D x\le\int_{\K}\abs{f(x)-f(\alpha x)}\,\D x+\inf_{z}\int_{\K}\abs{f(\alpha x)-z}\,\D x\,.
\]
For the first term, since the diameter of isotropic convex bodies
is at most $\sqrt{n(n+2)}\leq n+1$ \cite[Theorem~4.1]{KLS95isop},
we have $\sup_{\K}\abs x\lesssim n\,\norm{\cov\pi}^{1/2}_{\op}\leq n\cpi(\pi)^{1/2}$.
Hence, for the gauge function $\norm x_{\K}:=\inf\{t>0:x\in t\K\}$,
\begin{align*}
\int_{\K}\abs{f(x)-f(\alpha x)}\,\D x & \le\int_{\K}\int^{1}_{\alpha}\abs x\,\abs{\nabla f(tx)}\,\D t\D x\lesssim n\sqrt{\cpi}\int^{1}_{\alpha}t^{-n}\int_{t\K}\abs{\nabla f(y)}\,\D y\D t\\
 & =n\sqrt{\cpi}\int_{\K}\abs{\nabla f(y)}\int^{1}_{\max\{\alpha,\norm y_{\K}\}}t^{-n}\,\D t\D y\\
 & \leq n\sqrt{\cpi}\int_{\K}\abs{\nabla f(y)}\,\alpha^{-n}\,\int^{1}_{\max\{\alpha,\norm y_{\K}\}}\D t\D y\\
 & \le n\sqrt{\cpi}\int_{\K}\abs{\nabla f(y)}\cdot2\,(1-\norm y_{\K})\,\D y\\
 & \lesssim\frac{n\sqrt{\cpi}}{r}\int_{\K}\delta(y)\,\abs{\nabla f(y)}\,\D y\,,
\end{align*}
where in the last line, we used $\delta(y)\ge r\,(1-\norm y_{\K})$,
since $\norm y_{\K}\,\K+(1-\norm y_{\K})\,B(0,r)\subset\K$.

For the second term, using the standard equivalence between the $L^{1}$
and $L^{2}$-Poincar\'e constants for logconcave measures, and noting
that the constant is $\alpha^{2}\cpi$ for $\alpha\K$,
\[
\inf_{z}\int_{\K}\abs{f(\alpha x)-z}\,\D x=\alpha^{-n}\inf_{z}\int_{\alpha\K}\abs{f(x)-z}\,\D x\le\alpha^{-n}\alpha\sqrt{\cpi}\int_{\alpha\K}\abs{\nabla f}\,\D x\lesssim\frac{n\sqrt{\cpi}}{r}\int_{\alpha\K}\delta(x)\,\abs{\nabla f}\,\D x\,,
\]
where we used $\delta(x)\ge\frac{r}{2n}$ for $x\in\alpha\K$. The
theorem follows by adding these two bounds.
\end{proof}

\begin{acknowledgement*}
This work was supported in part by NSF Awards CCF-2504995, CCF-2236669,
CCF-2504994, and a Simons Investigator award. The second author is
grateful to Laci Lov\'asz for many inspiring discussions on the problem. 
\end{acknowledgement*}
\bibliographystyle{alpha}
\bibliography{main}

@article{HP83inequalities,
	author = {Cornelius O. Horgan and Lawrence E. Payne},
	journal = {Archive for Rational Mechanics and Analysis},
	number = {2},
	pages = {165--179},
	title = {On inequalities of {K}orn, {F}riedrichs and {B}abu\v{s}ka-{A}ziz},
	volume = {82},
	year = {1983}}

@incollection{BA72survey,
  author    = {Ivo Babu{\v{s}}ka and Abdul K. Aziz},
  title     = {Survey lectures on the mathematical foundations of the finite element method},
  booktitle = {The Mathematical Foundations of the Finite Element Method with Applications to Partial Differential Equations
               ({Proc. Sympos., Univ. Maryland, Baltimore, Md., 1972})},
  pages     = {1--359},
  publisher = {Academic Press},
  address   = {New York},
  year      = {1972},
  note      = {With the collaboration of G. Fix and R. B. Kellogg},
  mrnumber  = {0421106}
}

@article{CE25localization,
	author = {Yuansi Chen and Ronen Eldan},
	journal = {Duke Mathematical Journal},
	number = {8},
	pages = {1431--1510},
	title = {Localization schemes: a framework for proving mixing bounds for {M}arkov chains},
	volume = {174},
	year = {2025}}

@article{KV26unified,
	author = {Kook, Yunbum and Vempala, Santosh S.},
	journal = {arXiv preprint arXiv:2606.12694},
	title = {A unified complexity bound for logconcave sampling},
	year = {2026}}

@article{KVZ26INO,
	author = {Kook, Yunbum and Vempala, Santosh S. and Zhang, Matthew S.},
	doi = {https://doi.org/10.1002/rsa.70061},
	eprint = {https://onlinelibrary.wiley.com/doi/pdf/10.1002/rsa.70061},
	journal = {Random Structures \& Algorithms},
	number = {3},
	pages = {e70061},
	title = {In-and-{O}ut: algorithmic diffusion for sampling convex bodies},
	url = {https://onlinelibrary.wiley.com/doi/abs/10.1002/rsa.70061},
	volume = {68},
	year = {2026}}

@article{KV25localization,
	author = {Kook, Yunbum and Vempala, Santosh S.},
	journal = {arXiv preprint arXiv:2512.10848},
	title = {The localization method for high-dimensional inequalities},
	year = {2025}}

@article{CE25hitandrun,
	author = {Chen, Yuansi and Eldan, Ronen},
	journal = {Discrete \& Computational Geometry},
	number = {3},
	pages = {747--794},
	title = {Hit-and-{R}un mixing via localization schemes},
	volume = {75},
	year = {2026}}

@article{KV26zeroLC,
	author = {Kook, Yunbum and Vempala, Santosh S.},
	journal = {arXiv preprint arXiv:2507.18021},
	title = {Zeroth-order logconcave sampling},
	year = {2025}}

@inproceedings{KV25sampling,
	author = {Kook, Yunbum and Vempala, Santosh S.},
	booktitle = {{S}ymposium on {T}heory of {C}omputing},
	doi = {10.1145/3717823.3718202},
	isbn = {979-8-4007-1510-5},
	mrclass = {68Q87},
	mrnumber = {4928485},
	pages = {924--932},
	publisher = {ACM},
	title = {Sampling and integration of logconcave functions by algorithmic diffusion},
	url = {https://doi.org/10.1145/3717823.3718202},
	year = {2025}}

@incollection{vempala05geometric,
	author = {Vempala, Santosh S.},
	booktitle = {Combinatorial and Computational Geometry},
	isbn = {978-0-521-84862-6; 0-521-84862-8},
	mrclass = {90C59 (52B55 60D05 60G50)},
	mrnumber = {2178341},
	pages = {577--616},
	publisher = {Cambridge Univ. Press},
	series = {Math. Sci. Res. Inst. Publ.},
	title = {Geometric random walks: a survey},
	volume = {52},
	year = {2005}}

@article{NS22char,
  title={On the mixing time of coordinate hit-and-run},
  author={Narayanan, Hariharan and Srivastava, Piyush},
  journal={Combinatorics, Probability and Computing},
  volume={31},
  number={2},
  pages={320--332},
  year={2022},
  publisher={Cambridge University Press}
}

@article{NRS25sampling,
	author = {Narayanan, Hariharan and Rajaraman, Amit and Srivastava, Piyush},
	doi = {10.1007/s00440-024-01341-w},
	fjournal = {Probability Theory and Related Fields},
	issn = {0178-8051,1432-2064},
	journal = {Probability Theory and Related Fields},
	mrclass = {60J05 (52A20 68W40)},
	mrnumber = {4898100},
	number = {3-4},
	pages = {1169--1232},
	title = {Sampling from convex sets with a cold start using multiscale decompositions},
	url = {https://doi.org/10.1007/s00440-024-01341-w},
	volume = {191},
	year = {2025}}

@article{LV23char,
	author = {Laddha, Aditi and Vempala, Santosh S.},
	doi = {10.1007/s00454-023-00497-x},
	fjournal = {Discrete \& Computational Geometry. An International Journal of Mathematics and Computer Science},
	issn = {0179-5376,1432-0444},
	journal = {Discrete \& Computational Geometry},
	mrclass = {60J05 (37A25 37A30)},
	mrnumber = {4627347},
	mrreviewer = {Malin\ P.\ Forsstr\"om},
	number = {2},
	pages = {406--425},
	title = {Convergence of {G}ibbs sampling: coordinate hit-and-run mixes fast},
	url = {https://doi.org/10.1007/s00454-023-00497-x},
	volume = {70},
	year = {2023}}

@book{evans10partial,
	author = {Evans, Lawrence C.},
	doi = {10.1090/gsm/019},
	edition = {Second},
	isbn = {978-0-8218-4974-3},
	mrclass = {35-01},
	mrnumber = {2597943},
	mrreviewer = {Diego M. Maldonado},
	pages = {xxii+749},
	publisher = {American Mathematical Society},
	series = {Graduate Studies in Mathematics},
	title = {Partial differential equations},
	url = {https://doi.org/10.1090/gsm/019},
	volume = {19},
	year = {2010}}

@article{lovasz90compute,
	author = {Lov\'{a}sz, L\'{a}szl\'{o}},
	journal = {Jber. d. Dt. Math.-Verein, Jubil{\"a}umstagung},
	pages = {138--151},
	publisher = {DIMACS, Center for Discrete Mathematics and Theoretical Computer Science},
	title = {How to compute the volume?},
	year = {1990}}

@article{Lovasz99hit,
	author = {Lov\'{a}sz, L\'{a}szl\'{o}},
	doi = {10.1007/s101070050099},
	id = {Lov{\'a}sz1999},
	isbn = {1436-4646},
	journal = {Mathematical Programming},
	number = {3},
	pages = {443--461},
	title = {Hit-and-run mixes fast},
	url = {https://doi.org/10.1007/s101070050099},
	volume = {86},
	year = {1999}}

@article{Smith84HAR,
	author = {Smith, Robert L.},
	doi = {10.1287/opre.32.6.1296},
	fjournal = {Operations Research},
	issn = {0030-364X},
	journal = {Operations Research},
	mrclass = {90B15 (65C05)},
	mrnumber = {775260},
	number = {6},
	pages = {1296--1308},
	title = {Efficient {M}onte {C}arlo procedures for generating points uniformly distributed over bounded regions},
	url = {https://doi.org/10.1287/opre.32.6.1296},
	volume = {32},
	year = {1984}}

@article{LV06hit,
	author = {Lov\'{a}sz, L\'{a}szl\'{o} and Vempala, Santosh S.},
	doi = {10.1137/s009753970544727x},
	issn = {1095-7111},
	journal = {SIAM Journal on Computing},
	number = {4},
	pages = {985--1005},
	publisher = {Society for Industrial and Applied Mathematics (SIAM)},
	title = {Hit-and-Run from a corner},
	url = {http://dx.doi.org/10.1137/S009753970544727X},
	volume = {35},
	year = {2006}}

@article{LV07geometry,
	author = {Lov\'{a}sz, L\'{a}szl\'{o} and Vempala, Santosh S.},
	doi = {10.1002/rsa.20135},
	fjournal = {Random Structures \& Algorithms},
	issn = {1042-9832},
	journal = {Random Structures \& Algorithms},
	mrclass = {94A20 (60C05 60J10 60J22 62D05 65C10)},
	mrnumber = {2309621},
	mrreviewer = {Hsien-Kuei Hwang},
	number = {3},
	pages = {307--358},
	title = {The geometry of logconcave functions and sampling algorithms},
	url = {https://doi.org/10.1002/rsa.20135},
	volume = {30},
	year = {2007}}

@article{LV06simulated,
	author = {Lov\'{a}sz, L\'{a}szl\'{o} and Vempala, Santosh S.},
	doi = {10.1016/j.jcss.2005.08.004},
	fjournal = {Journal of Computer and System Sciences},
	issn = {0022-0000},
	journal = {Journal of Computer and System Sciences},
	mrclass = {68U05 (52A20 60C05 68W20)},
	mrnumber = {2205290},
	mrreviewer = {Carla Peri},
	number = {2},
	pages = {392--417},
	title = {Simulated annealing in convex bodies and an {$O^*(n^4)$} volume algorithm},
	url = {https://doi.org/10.1016/j.jcss.2005.08.004},
	volume = {72},
	year = {2006}}

@article{LS93random,
	author = {Lov\'{a}sz, L\'{a}szl\'{o} and Simonovits, Mikl\'{o}s},
	doi = {10.1002/rsa.3240040402},
	fjournal = {Random Structures \& Algorithms},
	issn = {1042-9832},
	journal = {Random Structures \& Algorithms},
	mrclass = {90C27 (52B55 90C15)},
	mrnumber = {1238906},
	mrreviewer = {Gerard Sierksma},
	number = {4},
	pages = {359--412},
	title = {Random walks in a convex body and an improved volume algorithm},
	url = {https://doi.org/10.1002/rsa.3240040402},
	volume = {4},
	year = {1993}}

@article{eldan13thin,
	author = {Eldan, Ronen},
	doi = {10.1007/s00039-013-0214-y},
	fjournal = {Geometric and Functional Analysis},
	issn = {1016-443X},
	journal = {Geometric and Functional Analysis},
	mrclass = {60D05 (52A40 60H10)},
	mrnumber = {3053755},
	mrreviewer = {Christian Rau},
	number = {2},
	pages = {532--569},
	title = {Thin shell implies spectral gap up to polylog via a stochastic localization scheme},
	url = {https://doi.org/10.1007/s00039-013-0214-y},
	volume = {23},
	year = {2013}}

@article{CV18Gaussian,
	author = {Cousins, Ben and Vempala, Santosh S.},
	doi = {10.1137/15M1054250},
	fjournal = {SIAM Journal on Computing},
	issn = {0097-5397},
	journal = {SIAM Journal on Computing},
	mrclass = {68W20 (52A38 60D05 65C40)},
	mrnumber = {3818340},
	number = {3},
	pages = {1237--1273},
	publisher = {Society for Industrial and Applied Mathematics (SIAM)},
	title = {Gaussian cooling and {$O^*(n^3)$} algorithms for volume and {G}aussian volume},
	url = {https://doi.org/10.1137/15M1054250},
	volume = {47},
	year = {2018}}

@article{KLS97random,
	author = {Kannan, Ravi and Lov\'{a}sz, L\'{a}szl\'{o} and Simonovits, Mikl\'{o}s},
	doi = {10.1002/(SICI)1098-2418(199708)11:1<1::AID-RSA1>3.0.CO;2-X},
	fjournal = {Random Structures \& Algorithms},
	issn = {1042-9832},
	journal = {Random Structures \& Algorithms},
	mrclass = {68Q25 (52A20 52A38 60J15 60J20)},
	mrnumber = {1608200},
	mrreviewer = {Mark R. Jerrum},
	number = {1},
	pages = {1--50},
	title = {Random walks and an {$O^*(n^5)$} volume algorithm for convex bodies},
	url = {https://doi.org/10.1002/(SICI)1098-2418(199708)11:1<1::AID-RSA1>3.0.CO;2-X},
	volume = {11},
	year = {1997}}

@article{KV06simulated,
	author = {Kalai, Adam T. and Vempala, Santosh S.},
	doi = {10.1287/moor.1060.0194},
	fjournal = {Mathematics of Operations Research},
	issn = {0364-765X},
	journal = {Mathematics of Operations Research},
	mrclass = {90C25 (90C59)},
	mrnumber = {2233996},
	number = {2},
	pages = {253--266},
	publisher = {Institute for Operations Research and the Management Sciences (INFORMS)},
	title = {Simulated annealing for convex optimization},
	url = {https://doi.org/10.1287/moor.1060.0194},
	volume = {31},
	year = {2006}}

@article{DFK91random,
	author = {Dyer, Martin and Frieze, Alan and Kannan, Ravi},
	doi = {10.1145/102782.102783},
	fjournal = {Journal of the ACM},
	issn = {0004-5411},
	journal = {Journal of the ACM},
	mrclass = {68U05 (52B55 68Q25)},
	mrnumber = {1095916},
	number = {1},
	pages = {1--17},
	title = {A random polynomial-time algorithm for approximating the volume of convex bodies},
	url = {https://doi.org/10.1145/102782.102783},
	volume = {38},
	year = {1991}}

@incollection{LS90mixing,
	author = {Lov\'{a}sz, L\'{a}szl\'{o} and Simonovits, Mikl\'{o}s},
	booktitle = {{S}ymposium on {F}oundations of {C}omputer {S}cience},
	doi = {10.1109/FSCS.1990.89553},
	mrclass = {68Q25 (52A38 52B55 60J05)},
	mrnumber = {1150706},
	mrreviewer = {Mark R. Jerrum},
	pages = {346--354},
	publisher = {IEEE},
	title = {The mixing rate of {M}arkov chains, an isoperimetric inequality, and computing the volume},
	url = {https://doi.org/10.1109/FSCS.1990.89553},
	year = {1990}}

@article{Klartag23log,
	author = {Klartag, Bo'az},
	fjournal = {Ars Inveniendi Analytica},
	journal = {Ars Inveniendi Analytica},
	mrclass = {52A40 (58J65)},
	mrnumber = {4603941},
	mrreviewer = {Ge Xiong},
	number = {4},
	pages = {1--17},
	title = {Logarithmic bounds for isoperimetry and slices of convex sets},
	url = {https://mathscinet.ams.org/mathscinet-getitem?mr=4603941},
	year = {2023}}

@article{KLS95isop,
	author = {Kannan, Ravi and Lov\'{a}sz, L\'{a}szl\'{o} and Simonovits, Mikl\'{o}s},
	doi = {10.1007/BF02574061},
	fjournal = {Discrete \& Computational Geometry. An International Journal of Mathematics and Computer Science},
	issn = {0179-5376},
	journal = {Discrete \& Computational Geometry},
	mrclass = {52A40 (52A38 68Q20)},
	mrnumber = {1318794},
	mrreviewer = {Carla Peri},
	number = {3},
	pages = {541--559},
	title = {Isoperimetric problems for convex bodies and a localization lemma},
	url = {https://doi.org/10.1007/BF02574061},
	volume = {13},
	year = {1995}}

@article{Turchin71computation,
	author = {Turchin, Valentin F.},
	journal = {Theory of Probability \& Its Applications},
	number = {4},
	pages = {720-724},
	title = {On the computation of multidimensional integrals by the {M}onte-{C}arlo method},
	volume = {16},
	year = {1971}}

@inproceedings{BG79constraint,
	author = {Boneh, Arnon and Golan, A.},
	booktitle = {Third European congress on operations research (EURO III), Amsterdam},
	title = {Constraints' redundancy and feasible region boundedness by random feasible point generator ({RFPG})},
	year = {1979}}

@incollection{Costabel2017,
	address = {Cham},
	author = {Costabel, Martin},
	booktitle = {Recent Trends in Operator Theory and Partial Differential Equations},
	doi = {10.1007/978-3-319-47079-5_4},
	editor = {Maz'ya, Vladimir and Natroshvili, David and Shargorodsky, Eugene and Wendland, Wolfgang L.},
	pages = {79--88},
	publisher = {Birkh{\"a}user},
	series = {Operator Theory: Advances and Applications},
	title = {Inequalities of {B}abu{\v{s}}ka--{A}ziz and {F}riedrichs--{V}elte for differential forms},
	volume = {258},
	year = {2017}}

@article{HurriSyrjanen1994,
	author = {Hurri-Syrj{\"a}nen, Ritva},
	doi = {10.1090/S0002-9939-1994-1169032-X},
	journal = {Proceedings of the American Mathematical Society},
	number = {1},
	pages = {213--222},
	title = {An improved {P}oincar{\'e} inequality},
	volume = {120},
	year = {1994}}

@article{Zsuppan2020,
	author = {Zsupp{\'a}n, S{\'a}ndor},
	doi = {10.4171/ZAA/1656},
	journal = {Zeitschrift f{\"u}r Analysis und ihre Anwendungen},
	number = {2},
	pages = {171--184},
	title = {Connections between optimal constants in some norm inequalities for differential forms},
	volume = {39},
	year = {2020}}

@article{CW10poincare,
	author = {Seng-Kee Chua and Richard L. Wheeden},
	doi = {10.4310/MRL.2010.v17.n5.a15},
	journal = {Mathematical Research Letters},
	number = {5},
	pages = {993--1011},
	publisher = {International Press of Boston},
	title = {Weighted {P}oincar\'{e} inequalities on convex domains},
	url = {https://doi.org/10.4310/MRL.2010.v17.n5.a15},
	volume = {17},
	year = {2010}}

@article{DD08improved,
	author = {Irene Drelichman and Ricardo G. Dur\'{a}n},
	doi = {https://doi.org/10.1016/j.jmaa.2008.06.005},
	issn = {0022-247X},
	journal = {Journal of Mathematical Analysis and Applications},
	number = {1},
	pages = {286-293},
	title = {Improved {P}oincar\'{e} inequalities with weights},
	url = {https://www.sciencedirect.com/science/article/pii/S0022247X08006276},
	volume = {347},
	year = {2008}}

\appendix

\section{Functional-analytic background\label{sec:Functional-analytic-background}}

This appendix expands the functional-analytic facts used in the proof.
All Sobolev spaces are taken over the open set $\K$. A standard reference
is Evans~\cite[Chapter~5]{evans10partial}.

\paragraph{Hilbert spaces.}

A real \emph{Hilbert space} is a vector space $F$ with an inner product
$\inner{f,g}_{F}$ that is complete in the induced norm $\norm f^{2}_{F}:=\inner{f,f}_{F}$,
where completeness means that every Cauchy sequence converges to an
element of the same space. If $T:F\to G$ is a bounded linear map
between Hilbert spaces, its adjoint is the unique bounded map $T^{*}:G\to F$
satisfying
\[
\inner{Tf,g}_{G}=\inner{f,T^{*}g}_{F}\qquad\text{for every }f\in F,\ g\in G\,.
\]
The Hilbert projection theorem states that every nonempty closed convex
set contains a unique element of minimum norm. 

\paragraph{Weak derivatives.}

For smooth $f$ and $\varphi\in C^{\infty}_{c}(\K)$, integration
by parts gives
\[
\int_{\K}f\,\partial_{j}\varphi\,\D x=-\int_{\K}(\partial_{j}f)\,\varphi\,\D x\,.
\]
There is no boundary term because $\varphi$ is compactly supported
in $\K$. We take this identity as the definition for $f\in L^{2}(\K)$
that need not possess a classical derivative: a function $v_{j}\in L^{2}(\K)$
is the \emph{weak derivative} $\partial_{j}f$ if
\[
\int_{\K}f\,\partial_{j}\varphi\,\D x=-\int_{\K}v_{j}\,\varphi\,\D x\qquad\text{for every }\varphi\in C^{\infty}_{c}(\K)\,.
\]
Thus, the definition preserves precisely the integration-by-parts
rule needed later. The weak derivative is unique (in a.e.\ sense).
It is also stable under $L^{2}$ limits: if $f_{k}\to f$ and $\partial_{j}f_{k}\to v_{j}$
in $L^{2}(\K)$, passing to the limit in the preceding identity shows
that $\partial_{j}f=v_{j}$.

\paragraph{Sobolev spaces.}

The Sobolev space is
\[
H^{1}(\K):=\{f\in L^{2}(\K):\partial_{j}f\in L^{2}(\K)\text{ for every }j\in[n]\}\quad\text{with norm }\norm f^{2}_{H^{1}(\K)}:=\int_{\K}(\abs f^{2}+\abs{\nabla f}^{2})\,\D x\,.
\]
The stability of weak derivatives under limits shows that $H^{1}(\K)$
is complete, hence Hilbert.

Let us consider the $H^{1}$-closure of smooth functions with compact
support in $\K$, $H^{1}_{0}(\K):=\overline{C^{\infty}_{c}(\K)}^{H^{1}(\K)}$.
Namely, for any $f\in H^{1}_{0}(\K)$, there exists a sequence of
compactly-supported smooth functions $\vphi_{k}$ such that $\vphi_{k}\to f$
in the $H^{1}$ norm. This closure encodes a zero boundary condition
without requiring pointwise boundary values.

\paragraph{Actions on functions and measures.}

Let $\mc B(\K)$ be the Borel $\sigma$-algebra of $\K$. A \emph{Markov
kernel} is a map $P:\K\times\mc B(\K)\to[0,1]$ such that $P(x,\cdot)$
is a probability measure for every $x\in\K$ and $x\mapsto P(x,A)$
is measurable for every $A\in\mc B(\K)$. For a bounded measurable
function $f$ and a probability measure $\mu$, define
\[
(Pf)(x):=\int_{\K}f(y)\,P(x,\D y),\qquad(\mu P)(A):=\int_{\K}P(x,A)\,\D\mu(x)\,.
\]
The first action is the associated \emph{Markov operator} on functions,
while the second one is the law after one transition from an initial
law $\mu$.

\paragraph{Stationarity and reversibility.}

A probability measure $\pi$ is \emph{stationary} if $\pi P=\pi$.
In this case $\pi(Pf)=\pi f$, so $P$ preserves $L^{2}_{0}(\pi)$.
Jensen's inequality and stationarity give
\[
\norm{Pf}^{2}_{2}\leq\int_{\K}P(f^{2})\,\D\pi=\int_{\K}f^{2}\,\D\pi=\norm f^{2}_{2}\,,
\]
so $P$ is a bounded linear map. Then, its adjoint $P^{*}$ is characterized
by $\inner{Pf,g}_{\pi}=\inner{f,P^{*}g}_{\pi}$ for every $f,g\in L^{2}(\pi)$.
The operator is \emph{self-adjoint} if $P=P^{*}$. A Markov kernel
$P$ is \emph{reversible} with respect to $\pi$ if $\D\pi(x)\,P(x,\D y)=\D\pi(y)\,P(y,\D x)$.
This is equivalent to self-adjointness of $P$ on $L^{2}(\pi)$.

\section{From Babu\v{s}ka--Aziz constant to improved Poincar\'e constant
\label{sec:BA_to_Poincare}}

Throughout this section, we use $\Omega\subset\Rn$ for the domain,
$Q=L^{2}_{0}(\Omega)$ for the space of square-integrable functions
over $\Omega$ with zero mean and norm $\norm f_{Q}:=\norm f_{2}$,
and $V=H^{1}_{0}(\Omega;\Rn)$ for the space of vector fields on $\Omega$
with square-integrable derivatives that vanish on the boundary and
norm $\norm v_{V}=\norm{\nabla v}_{2}$ (note that this is the $L^{2}$-norm
of the Frobenius norm of $\nabla v\in\Rnn$). As implied in \S\ref{subsec:Digesting-proof-ideas},
we will focus on the divergence operator $\Div:V\to Q$. Since $\int_{\Omega}\Div v=\int_{\partial\Omega}v\cdot\nu=0$
for $v\in V$ (here, $\nu$ is the outward normal vector), this divergence
operator is well-defined.

In this section, we will recap the proof of Theorem~\ref{thm:BA_to_IPI}
for the reader's convenience, focusing on Euclidean domains. Note
that this result follows from combining \cite[Theorem~2.1]{Costabel2017}
and \cite[Lemma~3.4]{Zsuppan2020}. The proof of the theorem will
follow from the following lemmas. The first one is just a dual definition
of $C_{\msf{BA}}$. 
\begin{lem}[Duality]
\label{lem:dual_BA} The following are equivalent for any $C>0$: 
\begin{enumerate}
\item For any $f\in Q$, there exists $v\in V$ such that $\Div v=f$ and
$\norm{\nabla v}^{2}_{2}\le C\,\norm f^{2}_{2}$.
\item For any $f\in Q$, it holds that $\norm f^{2}_{2}\le C\,(\sup_{v\in V,v\neq0}\frac{\inner{f,\Div v}_{Q}}{\norm{\nabla v}_{2}})^{2}$.
\end{enumerate}
\end{lem}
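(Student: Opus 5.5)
The plan is to prove the two implications separately. The easy direction is (1)$\Rightarrow$(2), which is a Cauchy--Schwarz pairing. The converse (2)$\Rightarrow$(1) is the closed range theorem in quantitative form, and there I will build the solution $v$ as a minimizer of a variational problem on $V$.

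\textbf{(1)$\Rightarrow$(2).} Fix $f\in Q$. If $f=0$ there is nothing to prove. Otherwise take $v$ from (1), so $\Div v=f$; note $v\neq0$. Then
\[
\norm f^{2}_{2}=\inner{f,\Div v}_{Q}\le S(f)\,\norm{\nabla v}_{2}\le S(f)\,\sqrt{C}\,\norm f_{2},
\]
where $S(f)$ denotes the supremum in (2). Dividing by $\norm f_{2}$ and squaring gives (2).

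\textbf{(2)$\Rightarrow$(1).} I use that $V$ with norm $\norm{\nabla v}_{2}$ is a Hilbert space, by the Dirichlet Poincar\'e inequality as in Lemma~\ref{lem:known-divergence}. The map $\Div:V\to Q$ is bounded, since $\norm{\Div v}_{2}\le\sqrt n\,\norm{\nabla v}_{2}$. Its adjoint $B:=\Div^{*}:Q\to V$ is defined by Riesz representation through $\inner{\nabla(Bf),\nabla v}=\inner{f,\Div v}_{Q}$ for all $v\in V$. With this definition, $S(f)=\norm{\nabla Bf}_{2}$, so (2) says exactly that $\norm f_{2}^{2}\le C\norm{Bf}_{V}^{2}$.

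Given $f\in Q$, I would minimize the strictly convex, coercive functional
\[
J(g):=\tfrac12\norm{Bg}_{V}^{2}-\inner{f,g}_{Q}\qquad\text{over }g\in Q.
\]
Coercivity comes from (2): $J(g)\ge\tfrac{1}{2C}\norm g^{2}-\norm f\norm g$. The functional $J$ is also continuous and convex, so a minimizer $p$ exists. Indeed, $\norm{B\cdot}_V$ is a norm equivalent to $\norm\cdot_Q$, so $J$ is a strongly convex quadratic on $Q$ and Riesz/Lax--Milgram applies directly.

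The Euler--Lagrange equation reads $\inner{Bp,Bg}_{V}=\inner{f,g}_{Q}$ for all $g\in Q$. Set $v:=Bp$. By the definition of $B$, $\inner{Bp,Bg}_V=\inner{\Div Bp,g}_Q$, so $\inner{\Div v-f,g}_{Q}=0$ for every $g\in Q$. Since $\Div v\in Q$ (it has zero mean, as shown at the start of this section) and $f\in Q$, this forces $\Div v=f$.

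For the norm bound, test the Euler--Lagrange equation with $g=p$:
\[
\norm{\nabla v}_{2}^{2}=\inner{f,p}_{Q}\le\norm f_{2}\norm p_{2}\le\norm f_{2}\sqrt C\,\norm{Bp}_{V}=\sqrt C\,\norm f_{2}\norm{\nabla v}_{2}.
\]
This yields $\norm{\nabla v}_{2}^{2}\le C\norm f_{2}^{2}$, which is (1).

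The main obstacle is the existence step, namely showing that $\Div$ is surjective onto $Q$ rather than merely having dense range. The approach above handles this by working entirely on the $Q$ side: the lower bound (2) makes $B^{*}B$ coercive on $Q$, and Lax--Milgram then produces the preimage. Routine points remain to check: that $\Div$ maps into $Q$ (zero mean), which was noted at the start of this section, and the identification of the supremum in (2) with $\norm{Bf}_V$.
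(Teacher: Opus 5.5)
Your proof is correct and follows the paper's route. The paper gives no formal proof: it identifies the supremum in (2) with $\norm{\Div^{*}f}_{V}$, so that (2) reads $\norm f_{Q}\le C^{1/2}\norm{\Div^{*}f}_{V}$, and leaves the equivalence with a bounded right inverse of $\Div$ to standard duality. Your Lax--Milgram argument on the form $\inner{\Div^{*}g,\Div^{*}h}_{V}$ supplies exactly that omitted closed-range step, and it yields the minimal-norm preimage $v=\Div^{*}(\Div\Div^{*})^{-1}f$.
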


When solving $\Div v=f$, the Babu\v{s}ka--Aziz constant essentially
asks how large $\norm v_{V}$ is compared with $\norm f_{Q}$, and
this is the first statement (i.e., $\norm v_{V}\leq C^{1/2}\,\norm{\Div v}_{Q}$
for some $v\in V$). Consider the adjoint $\Div^{*}:Q\to V$ such
that $\inner{f,\Div v}_{Q}=\inner{\Div^{*}f,v}_{V}$. Then, the RHS
in the second statement is equivalent to $\norm f_{Q}\leq C^{1/2}\,\norm{\Div^{*}f}_{V}$.

The next lemma derives an explicit solution of the supremum on the
RHS in the previous lemma using the Riesz representation theorem.
\begin{lem}[Map to vector field]
\label{lem:Riesz} For any $f\in Q$, there is a unique $u\in V$
such that $\inner{f,\Div v}_{Q}=\inner{u,v}_{V}$ for any $v\in V$,
and 
\[
\norm{\nabla u}_{2}=\sup_{v\in V,v\neq0}\frac{\left\langle f,\Div v\right\rangle _{Q}}{\norm{\nabla v}_{2}}\,.
\]
\end{lem}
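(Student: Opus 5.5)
The plan is to apply the Riesz representation theorem on the Hilbert space $V$ and then read off the dual norm. First I check that $V=H^{1}_{0}(\Omega;\Rn)$ with inner product $\inner{u,v}_{V}:=\inner{\nabla u,\nabla v}_{2}=\sum_{i,j}\int_{\Omega}\partial_{j}u_{i}\,\partial_{j}v_{i}$ is a Hilbert space. This bilinear form is an inner product, and by the Dirichlet Poincar\'e (Friedrichs) inequality on the bounded domain $\Omega$ the norm $\norm{\nabla v}_{2}$ is equivalent to the full $H^{1}$ norm on $H^{1}_{0}$. Since $H^{1}_{0}(\Omega;\Rn)$ is closed in $H^{1}$, it is complete under this norm as well. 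This step is also used in the proof of Lemma~\ref{lem:known-divergence}.

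Next, fix $f\in Q$ and define the linear functional $\ell_{f}(v):=\inner{f,\Div v}_{Q}$ on $V$. It is bounded: by Cauchy--Schwarz and $\abs{\tr M}^{2}\le n\,\norm M^{2}_{\frob}$ applied pointwise,
\[
\abs{\ell_{f}(v)}\le\norm f_{2}\,\norm{\Div v}_{2}\le\sqrt{n}\,\norm f_{2}\,\norm{\nabla v}_{2}\,.
\]
The only bookkeeping point is the normalization of the inner product on $Q$ (Lebesgue versus $\pi$). Both $\inner{\cdot,\cdot}_{Q}$ and $\inner{\cdot,\cdot}_{V}$ must be taken with respect to the same measure; any constant factor is absorbed into $u$ and does not affect the argument. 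By the Riesz representation theorem there is a unique $u\in V$ with $\ell_{f}(v)=\inner{u,v}_{V}$ for all $v\in V$, which is the first claim.

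For the norm identity, the operator norm of $\ell_{f}$ equals $\norm{u}_{V}$. Concretely, for $v\neq0$, Cauchy--Schwarz in $V$ gives
\[
\frac{\inner{f,\Div v}_{Q}}{\norm{\nabla v}_{2}}=\frac{\inner{u,v}_{V}}{\norm v_{V}}\le\norm u_{V}=\norm{\nabla u}_{2}\,.
\]
If $u\neq0$, choosing $v=u$ attains equality. If $u=0$, then $\ell_{f}\equiv0$ and both sides vanish. Hence the supremum equals $\norm{\nabla u}_{2}$.

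No step here is a real obstacle. The only thing that genuinely needs justification is completeness of $V$ under $\norm{\nabla\cdot}_{2}$, which rests on the Friedrichs inequality for bounded $\Omega$. Everything else is the standard Riesz argument.
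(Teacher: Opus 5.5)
Your proposal is correct and follows the paper's proof essentially verbatim. Both arguments bound the functional $v\mapsto\inner{f,\Div v}_{Q}$ by $\sqrt{n}\,\norm f_{2}\norm{\nabla v}_{2}$, apply Riesz on $V$ with the inner product $\inner{\nabla u,\nabla v}$, and obtain the supremum from Cauchy--Schwarz with equality at $v=u$. Your extra remarks fill in two details the paper leaves implicit: completeness of $V$ under $\norm{\nabla\cdot}_{2}$ via the Friedrichs inequality, and the degenerate case $u=0$, where $v=u$ is not an admissible choice.
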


\begin{proof}
The map $v\mapsto\left\langle f,\Div v\right\rangle $ is a bounded
linear map, since $\abs{\inner{f,\Div v}}\le\norm f_{2}\,\norm{\Div v}_{2}\le\sqrt{n}\,\norm f_{2}\norm{\nabla v}_{2}$.
By the Riesz representation theorem, there is a unique $u\in V$ such
that 
\[
\inner{f,\Div v}_{Q}=\inner{u,v}_{V}=\inner{\nabla u,\nabla v}\le\norm{\nabla u}_{2}\,\norm{\nabla v}_{2}\,.
\]
Hence, the supremum on the RHS in the lemma statement is at most $\norm{\nabla u}_{2}$.
Setting $v=u$ shows equality. 
\end{proof}

Due to this lemma, it now suffices to show $\norm f_{Q}\le C^{1/2}\,\norm u_{V}$
for some $C$. The next lemma is a calculus identity about divergence
of vector fields.
\begin{lem}
\label{lem:dual_matrix} For $u\in V$, define the $n\times n$ matrix
function $G$ as $G_{ij}:=\partial_{j}u_{i}-\partial_{i}u_{j}$ with
$\norm G^{2}:=\int_{\Omega}\sum_{i<j}G^{2}_{ij}\,\D x$. Then, $\norm{\nabla u}^{2}_{2}=\norm{\Div u}^{2}_{2}+\norm G^{2}$.
\end{lem}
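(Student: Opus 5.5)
The plan is to prove the identity first for $u\in C^{\infty}_{c}(\Omega;\Rn)$ and then pass to general $u\in V$ by density. This is the same integration-by-parts argument behind \eqref{eq:prelim-sym-identity}.

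\textbf{Step 1: expand $\norm{G}^{2}$ pointwise.} Fix smooth compactly supported $u$. Summing over $i<j$ is half the sum over all ordered pairs (the diagonal terms vanish). Hence
\[
\sum_{i<j}G_{ij}^{2}=\frac{1}{2}\sum_{i,j}\bpar{\partial_{j}u_{i}-\partial_{i}u_{j}}^{2}=\sum_{i,j}(\partial_{j}u_{i})^{2}-\sum_{i,j}(\partial_{j}u_{i})(\partial_{i}u_{j}).
\]
Integrating over $\Omega$ gives
\[
\norm{G}^{2}=\norm{\nabla u}_{2}^{2}-\sum_{i,j}\int_{\Omega}(\partial_{j}u_{i})(\partial_{i}u_{j})\,\D x .
\]

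\textbf{Step 2: identify the cross term as $\norm{\Div u}_{2}^{2}$.} Integrate by parts twice, once in $x_{j}$ and once in $x_{i}$. No boundary terms appear because $u$ has compact support. This gives
\[
\int_{\Omega}\partial_{j}u_{i}\,\partial_{i}u_{j}\,\D x=-\int_{\Omega}u_{i}\,\partial_{j}\partial_{i}u_{j}\,\D x=\int_{\Omega}\partial_{i}u_{i}\,\partial_{j}u_{j}\,\D x .
\]
The second equality uses symmetry of mixed partials. Summing over $i,j$ yields $\int_{\Omega}(\Div u)^{2}\,\D x$. Substituting into Step 1 gives $\norm{\nabla u}_{2}^{2}=\norm{\Div u}_{2}^{2}+\norm{G}^{2}$ for smooth compactly supported fields.

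\textbf{Step 3: density.} For general $u\in V$, choose $u_{k}\in C^{\infty}_{c}(\Omega;\Rn)$ with $u_{k}\to u$ in $H^{1}$. Then $\partial_{j}u_{k,i}\to\partial_{j}u_{i}$ in $L^{2}$ for every $i,j$. Every term in the identity is a continuous quadratic form of the first derivatives in $L^{2}$, so each term converges and the identity passes to the limit.

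The only point needing care is the bookkeeping in Step 1: the normalization of $\norm{G}^{2}$ as a sum over $i<j$ must give exactly one copy of $\norm{\nabla u}_{2}^{2}$ minus the cross term. Beyond that, the proof is routine. The measure normalization (Lebesgue $\D x$ rather than $\pi$) is the same on both sides, so it does not matter.
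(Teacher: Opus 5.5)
Your proof is correct and follows the paper's argument. You expand $\norm{G}^{2}$ as half the sum over all ordered pairs, which gives $\norm{\nabla u}_{2}^{2}$ minus the cross term, and then identify the cross term with $\norm{\Div u}_{2}^{2}$ by integrating by parts twice and commuting mixed partials; your explicit density step from $C^{\infty}_{c}(\Omega;\Rn)$ to $V$ makes rigorous what the paper leaves implicit, since the paper's proof uses second derivatives of $u$ without first passing to smooth approximants.
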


\begin{proof}
Using integration by parts twice and the fact that partial derivatives
commute, 
\[
\norm{\Div u}^{2}_{2}=\sum_{ij}\int_{\Omega}\de_{i}u_{i}\de_{j}u_{j}=-\sum_{ij}\int_{\Omega}u_{i}\partial_{i}\partial_{j}u_{j}=\sum_{ij}-\int_{\Omega}u_{i}\partial_{j}\partial_{i}u_{j}=\sum_{ij}\int_{\Omega}\partial_{j}u_{i}\partial_{i}u_{j}\,.
\]
Let us now compute $\norm G^{2}$:
\begin{align*}
\norm G^{2} & =\sum_{i<j}\int_{\Omega}(\partial_{j}u_{i}-\partial_{i}u_{j})^{2}=\frac{1}{2}\sum_{ij}\int_{\Omega}(\partial_{j}u_{i}-\partial_{i}u_{j})^{2}=\sum_{ij}\int_{\Omega}(\partial_{i}u_{j})^{2}-\sum_{ij}\int_{\Omega}\partial_{j}u_{i}\partial_{i}u_{j}\\
 & =\norm{\nabla u}^{2}_{2}-\norm{\Div u}^{2}_{2}\,,
\end{align*}
which completes the proof.
\end{proof}

Next, using this definition of $G$, we get useful identities.
\begin{lem}
\label{lem:identities} For $f\in Q$, let $u\in V$ be the vector
field given by Lemma~\ref{lem:Riesz} and $G$ be the matrix function
for $u$ defined in Lemma~\ref{lem:dual_matrix}. Then for $h:=f-\Div u$,
we have
\begin{enumerate}
\item $\int_{\Omega}h=0$;
\item $\nabla h=\Div G$ (i.e., $\partial_{i}h=\sum^{n}_{j=1}\partial_{j}G_{ij}$
for $i\in[n]$);
\item $\inner{h,\Div u}_{Q}=\norm G^{2}$.
\end{enumerate}
\end{lem}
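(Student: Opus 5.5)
Item 1 is immediate: $f\in Q$ has zero mean, and $\int_\Omega \Div u=0$ for $u\in V=H^1_0(\Omega;\R^n)$. The latter follows by approximating $u$ in $H^1$ by $C^\infty_c$ fields, each of whose divergence integrates to zero, and using that $\Div$ is continuous into $L^2\subset L^1$ on the bounded domain. Hence $\int_\Omega h=0$.

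For item 2, I will read the Riesz identity of Lemma~\ref{lem:Riesz}, $\inner{f,\Div v}_Q=\inner{\nabla u,\nabla v}$ for all $v\in V$, as a weak PDE. Testing with $v=\varphi e_i$ for $\varphi\in C^\infty_c(\Omega)$ gives
\[
\int_\Omega f\,\partial_i\varphi=\sum_j\int_\Omega \partial_j u_i\,\partial_j\varphi,
\]
that is, $-\nabla f=\Delta u$ in distributions. Next I rewrite $\Delta u_i=\sum_j\partial_j(\partial_j u_i)$ as $\sum_j\partial_j G_{ij}+\partial_i\Div u$, using $\partial_j u_i=G_{ij}+\partial_i u_j$ and commuting distributional derivatives: $\sum_j\partial_j\partial_i u_j=\partial_i\Div u$. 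Then $-\partial_i f=\sum_j\partial_jG_{ij}+\partial_i\Div u$. This has the wrong sign relative to the statement, so I must fix the sign convention carefully. Under the alternative convention $\Div^*=-\nabla$ and $\inner{f,\Div v}=-\inner{\nabla f,v}$, the relation is $\nabla f=\Delta u$ (weakly, with the sign flip coming from moving the derivative onto $f$). I will redo the computation carefully via $\int f\partial_i\varphi=-\langle\partial_i f,\varphi\rangle$, which gives $\langle \partial_i f,\varphi\rangle=\langle\Delta u_i,\varphi\rangle$. Then $\partial_i f=\sum_j\partial_jG_{ij}+\partial_i\Div u$, i.e.\ $\partial_i h=\sum_j\partial_j G_{ij}$, as claimed. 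Everything here holds in $H^{-1}$, so "$\nabla h=\Div G$" is understood distributionally. This sign and convention bookkeeping is the only delicate point.

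For item 3, I take $v=u$ in the Riesz identity: $\inner{f,\Div u}=\norm{\nabla u}_2^2$. Lemma~\ref{lem:dual_matrix} gives $\norm{\nabla u}_2^2=\norm{\Div u}_2^2+\norm G^2$. Therefore
\[
\inner{h,\Div u}=\inner{f,\Div u}-\norm{\Div u}_2^2=\norm G^2.
\]
I expect no real obstacle beyond justifying distributional manipulations, which is done by density of $C^\infty_c$ in $V$.
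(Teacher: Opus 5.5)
Your proposal is correct and follows the paper's proof essentially step for step: zero mean from $\int_\Omega \Div u=0$, the weak equation $\Delta u=\nabla f$ read off from the Riesz identity, the splitting $\Delta u_i=\sum_j\partial_j G_{ij}+\partial_i\Div u$, and taking $v=u$ together with Lemma~\ref{lem:dual_matrix} for item 3. The only flaw is the detour in item 2: there is no sign convention to choose, because integrating by parts gives a minus sign on both sides ($\int_\Omega f\,\partial_i\varphi=-\langle\partial_i f,\varphi\rangle$ and $\sum_j\int_\Omega\partial_j u_i\,\partial_j\varphi=-\langle\Delta u_i,\varphi\rangle$), so your first claim $-\nabla f=\Delta u$ was simply a slip, and $\nabla f=\Delta u$ follows directly, as your corrected computation shows.
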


\begin{proof}
Since $u=0$ on $\partial\Omega$, we have $\int_{\Omega}\Div u=\int_{\partial\Omega}\inner{u,\nu}=0$.
Hence, $\int_{\Omega}h=\int_{\Omega}(f-\Div u)=0$. 

Next, using weak derivatives and integration by parts, we can deduce
$\Delta u=\nabla f$ from $\inner{f,\Div v}_{Q}=\inner{u,v}_{V}$.
Using this, for each $i$,
\[
\sum_{j}\partial_{j}G_{ij}=\sum_{j}\partial_{j}(\partial_{j}u_{i}-\partial_{i}u_{j})=\Delta u_{i}-\partial_{i}\Div u=\partial_{i}f-\partial_{i}\Div u=\partial_{i}h\,.
\]
Taking $v=u$ in the definition of $u$, we obtain $\inner{f,\Div u}_{Q}=\norm{\nabla u}^{2}_{2}=\norm{\Div u}^{2}_{2}+\norm G^{2}$,
and this implies that $\inner{h,\Div u}_{Q}=\inner{f-\Div u,\Div u}_{Q}=\norm G^{2}$.
\end{proof}

The following lemma captures the heart of the argument. Recall that
$\delta(x)$ is the distance of $x$ to the boundary of $\Omega$. 
\begin{lem}
\label{lem:poincare_bound} For any $h\in Q$ and $G$ an anti-symmetric
matrix function satisfying $\nabla h=\Div G$, we have $\norm{\delta\nabla h}_{2}\le2\norm G.$
Hence, if $\Omega$ has finite improved Poincar\'e constant $\cipi$,
then $\norm h^{2}_{2}\le4\cipi\,\norm G^{2}$. 
\end{lem}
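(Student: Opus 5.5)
The plan is to test the equation $\nabla h=\Div G$ against the vector field $w:=\delta^{2}\nabla h$ and use the antisymmetry of $G$ to kill the second-order terms. Formally,
\[
\int_{\Omega}\delta^{2}\abs{\nabla h}^{2}=\int_{\Omega}\nabla h\cdot w=\sum_{i,j}\int_{\Omega}\partial_{j}G_{ij}\,w_{i}=-\sum_{i,j}\int_{\Omega}G_{ij}\,\partial_{j}w_{i}\,,
\]
where the integration by parts has no boundary term because $w$ vanishes on $\partial\Omega$ (through the factor $\delta^{2}$). Expanding gives $\partial_{j}w_{i}=2\delta\,\partial_{j}\delta\,\partial_{i}h+\delta^{2}\partial_{i}\partial_{j}h$. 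The Hessian term is symmetric in $(i,j)$ while $G$ is antisymmetric, so its contribution vanishes. We are left with
\[
\int_{\Omega}\delta^{2}\abs{\nabla h}^{2}=-2\int_{\Omega}\delta\,(\nabla h)^{\T}G\,\nabla\delta\,.
\]

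Next I would use a pointwise estimate for antisymmetric matrices. Antisymmetry gives $a^{\T}Gb=\sum_{i<j}G_{ij}(a_{i}b_{j}-a_{j}b_{i})$. By Cauchy--Schwarz and Lagrange's identity $\sum_{i<j}(a_{i}b_{j}-a_{j}b_{i})^{2}=\abs a^{2}\abs b^{2}-(a\cdot b)^{2}\le\abs a^{2}\abs b^{2}$, this yields $\abs{a^{\T}Gb}\le(\sum_{i<j}G_{ij}^{2})^{1/2}\abs a\,\abs b$. We apply it with $a=\delta\nabla h$ and $b=\nabla\delta$, where $\abs{\nabla\delta}=1$ a.e. because $\delta$ is $1$-Lipschitz and is a distance function. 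Cauchy--Schwarz in $L^{2}$ then gives $\norm{\delta\nabla h}_{2}^{2}\le2\norm G\,\norm{\delta\nabla h}_{2}$, and dividing yields $\norm{\delta\nabla h}_{2}\le2\norm G$. The second claim follows at once: $h\in Q$ has mean zero, so (IPI) gives $\norm h_{2}^{2}=\var h\le\cipi\norm{\delta\nabla h}_{2}^{2}\le4\cipi\norm G^{2}$. The identity holds for Lebesgue measure, so the normalization of $\pi$ is irrelevant.

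The main obstacle is justifying the formal computation. A priori $h$ is only in $L^{2}$ with $\nabla h=\Div G$ in the distributional sense, and $\norm{\delta\nabla h}_{2}$ could be infinite; the identity would then give nothing after dividing. I would handle this as follows.

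\textbf{Regularity.} Taking the divergence of $\nabla h=\Div G$ and using antisymmetry gives $\Delta h=\sum_{i,j}\partial_{i}\partial_{j}G_{ij}=0$ distributionally. So $h$ is harmonic, hence smooth in the interior by Weyl's lemma.

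\textbf{Truncation.} I would replace $\delta$ by the compactly supported Lipschitz function $\delta_{\veps}:=(\delta-\veps)_{+}$, which satisfies $\abs{\nabla\delta_{\veps}}\le1$ a.e. The field $w=\delta_{\veps}^{2}\nabla h$ is then a legitimate compactly supported $H^{1}$ test field (mollifying if needed), and all the integrals above are finite.

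\textbf{Limit.} Repeating the argument gives $\norm{\delta_{\veps}\nabla h}_{2}\le2\norm G$ uniformly in $\veps$. Letting $\veps\to0$, monotone convergence gives the claim.
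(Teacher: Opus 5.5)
Your proposal is correct and follows the paper's proof essentially step for step. You test $\nabla h=\Div G$ against $\delta^{2}\nabla h$, kill the Hessian term by antisymmetry, bound $(\nabla h)^{\T}G\,\nabla\delta$ via Lagrange's identity and $\abs{\nabla\delta}\le1$, and finish with Cauchy--Schwarz. Your Weyl-lemma plus $(\delta-\veps)_{+}$-truncation justification is in fact more careful than the paper's appeal to ``a standard regularization argument.''

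One point you (like the paper) still gloss over: $h$ need not lie in $H^{1}(\Omega)$, while the improved Poincar\'e inequality is defined only for $H^{1}$ test functions. This is fixed by applying it to the dilates $h_{t}(x):=h(x_{0}+t(x-x_{0}))$ with $t\uparrow1$. These are smooth on $\overline{\Omega}$ and converge to $h$ in $L^{2}$, and by convexity they satisfy $\int_{\Omega}\delta^{2}\abs{\nabla h_{t}}^{2}\le t^{-n}\int_{\Omega}\delta^{2}\abs{\nabla h}^{2}$.
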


\begin{proof}
By a standard regularization argument, it suffices to prove the claim
for smooth $h$ and $G$; the general case follows by approximation.
We start with the relation between $h$ and $G$. Using $\nabla h=\Div G$
in $(i)$ below,
\begin{align*}
\norm{\delta\nabla h}^{2}_{2} & =\int_{\Omega}\delta^{2}\,\abs{\nabla h}^{2}=\sum_{i}\int_{\Omega}\delta^{2}\,\partial_{i}h\partial_{i}h\underset{(i)}{=}\sum_{ij}\int_{\Omega}\delta^{2}\,\partial_{i}h\partial_{j}G_{ij}\underset{(ii)}{=}-\sum_{ij}\int_{\Omega}\partial_{j}(\delta^{2}\partial_{i}h)\,G_{ij}\\
 & =-2\sum_{ij}\int_{\Omega}\delta\,(\partial_{j}\delta)(\partial_{i}h)\,G_{ij}-\sum_{ij}\int_{\Omega}\delta^{2}\,(\partial_{ij}h)\,G_{ij}\underset{(iii)}{=}-2\sum_{ij}\int_{\Omega}\delta\,(\partial_{j}\delta)(\partial_{i}h)\,G_{ij}\,,
\end{align*}
where in $(ii)$, we used integration by parts and the fact that $\delta=0$
on $\partial\Omega$, and in $(iii)$, we used $\inner{\hess h,G}=0$
since the Hessian of $h$ is symmetric, and $G$ is anti-symmetric.
Then,
\begin{align*}
\norm{\delta\nabla h}^{2}_{2} & =-2\int_{\Omega}\delta\sum_{i<j}\{(\partial_{j}\delta)(\partial_{i}h)-(\partial_{i}\delta)(\partial_{j}h)\}\,G_{ij}\le2\int_{\Omega}\delta\,\Babs{\sum_{i<j}\{(\partial_{j}\delta)(\partial_{i}h)-(\partial_{i}\delta)(\partial_{j}h)\}\,G_{ij}}\\
 & \le2\int_{\Omega}\delta\,\Bpar{\sum_{i<j}\{(\partial_{j}\delta)(\partial_{i}h)-(\partial_{i}\delta)(\partial_{j}h)\}^{2}}^{1/2}\,\Bpar{\sum_{i<j}G^{2}_{ij}}^{1/2}\,.
\end{align*}
Next, the summation in the first term is equal to $\abs{\nabla\delta}^{2}\abs{\nabla h}^{2}-\inner{\nabla\delta,\nabla h}^{2}$
and thus at most $\abs{\nabla\delta}^{2}\abs{\nabla h}^{2}\le\abs{\nabla h}^{2}$,
since $\delta$ is $1$-Lipschitz. Therefore, 
\[
\norm{\delta\nabla h}^{2}_{2}\le2\int_{\Omega}\delta\,\abs{\nabla h}\,\Bpar{\sum_{i<j}G^{2}_{ij}}^{1/2}\le2\,\Bpar{\int_{\Omega}\delta^{2}\,\abs{\nabla h}^{2}}^{1/2}\Bpar{\int_{\Omega}\sum_{i<j}G^{2}_{ij}}^{1/2}\,,
\]
which completes the proof.
\end{proof}

With these lemmas in hand, we can prove the main theorem of this section.
\begin{proof}[Proof of Theorem \ref{thm:BA_to_IPI}.]
 Let $f\in Q$. Pick $u\in V$ from Lemma~\ref{lem:Riesz} and $G$
from Lemma~\ref{lem:dual_matrix} for this $u$. Then for $h=f-\Div u$,
we have $\norm h^{2}_{2}\le4\cipi\,\norm G^{2}$ by Lemma~\ref{lem:poincare_bound}.
Moreover, using Lemma~\ref{lem:identities} and Cauchy--Schwarz,
\[
\norm G^{2}=\inner{h,\Div u}_{Q}\le\norm h_{2}\,\norm{\Div u}_{2}\le2\sqrt{\cipi}\,\norm G\,\norm{\Div u}_{2}\,,
\]
which implies $\norm G^{2}\le4\cipi\,\norm{\Div u}^{2}_{2}$. Finally,
\begin{align*}
\norm f^{2}_{2} & =\norm h^{2}_{2}+\norm{\Div u}^{2}_{2}+2\inner{h,\Div u}_{Q}\le4\cipi\,\norm G^{2}+\norm{\Div u}^{2}_{2}+2\norm G^{2}\\
 & \le4\cipi\,\norm G^{2}+\norm{\Div u}^{2}_{2}+\norm G^{2}+4\cipi\,\norm{\Div u}^{2}_{2}=(1+4\cipi)\,(\norm{\Div u}^{2}_{2}+\norm G^{2})\\
 & =(1+4\cipi)\norm{\nabla u}^{2}_{2}=(1+4\cipi)\,\bpar{\sup_{v\in V,v\neq0}\frac{\inner{f,\Div v}_{Q}}{\norm{\nabla v}_{2}}}^{2}
\end{align*}
which proves the theorem. 
\end{proof}

\end{document}